\colorlet{acolor}{blue}
\colorlet{bcolor}{yellow}
\colorlet{ccolor}{green}
\colorlet{dcolor}{Dandelion}
\colorlet{aocolor}{Apricot!30}
\colorlet{aoocolor}{Apricot!40}
\colorlet{bocolor}{blue}
\colorlet{cocolor}{cyan}
\tikzset{
a/.style={->},
aa/.style={<-},
aaa/.style={->},
aarrow/.style={draw=blue, ultra thick,<-},
barrow/.style={draw=Red!70, ultra thick,<-},
  avertex/.style={circle,draw,inner sep=2pt,fill=acolor},
  smallavertex/.style={circle,draw,inner sep=1.5pt,fill=acolor},
  aavertex/.style={circle,draw,inner sep=2pt,fill=Apricot},
bvertex/.style={circle,draw,inner sep=2pt,fill=bcolor},
rvertex/.style={circle,draw,inner sep=2pt,fill=Red!70},
smallrvertex/.style={circle,draw,inner sep=1.5pt,fill=Red!70},
cvertex/.style={circle,draw,inner sep=2pt,fill=ccolor},
dvertex/.style={regular polygon,regular polygon sides=6,draw,inner sep=2pt,fill=dcolor},
  uvertex/.style={circle,draw=Black,inner sep=2pt,fill=white},
  schvertex/.style={regular polygon,regular polygon sides=6,draw,inner sep=2pt,fill=Red},
  vertex/.style={circle,draw,inner sep=2pt,fill=Gray!30},
  smallvertex/.style={circle,draw,inner sep=1.5pt,fill=Gray!30},
  aovertex/.style={circle,draw,inner sep=2pt,fill=aocolor},
  aoovertex/.style={circle,draw,inner sep=2pt,fill=aoocolor},
bovertex/.style={circle,draw,inner sep=2pt,fill=bocolor},
covertex/.style={circle,draw,inner sep=2pt,fill=cocolor},
dovertex/.style={regular polygon,regular polygon sides=6,draw,inner sep=2pt,fill=dcolor},
yvertex/.style={regular polygon,regular polygon sides=6,draw,inner sep=2pt,fill=ccolor}
}
\newtheorem{theorem}{Theorem}[section]
\newtheorem{lemma}[theorem]{Lemma}
\newtheorem{remark}[theorem]{Remark}
\newcommand{\bull}{\mbox{$\;\;\;$\vrule height .9ex width .8ex depth -.1ex}}
\newenvironment{proof}{\par\smallbreak\noindent{\bf Proof.~}}
{\unskip\nobreak\hfill \bull \par\medbreak}
\newenvironment{proofof}[1]{\par\smallbreak\noindent{\bf Proof of~#1.~}}
{\unskip\nobreak\hfill \bull \par\medbreak}
\newcounter{claim}
\renewcommand{\theclaim}{\Alph{claim}}
\newenvironment{claim}{\refstepcounter{claim}%
\par\medskip\par\noindent{\it Claim~\theclaim.~}~\rm}%
{\par\smallskip\par}
\newenvironment{subproof}{\par\noindent{\sl Proof of Claim~\theclaim.~}}%
{$\,\triangleleft$\par\medskip\par}
\newcommand{\noproofclaim}{$\,\triangleleft$}
\newcommand{\Case}[2]{\smallskip\par{\it Case #1:\/ #2.}}
\newcommand{\Subcase}[2]{\smallskip\par{\it Subcase #1:\/ #2.}}
\newcommand{\hide}[1]{}
\newcommand{\refeq}[1]{(\ref{eq:#1})}
\newcommand{\setdef}[2]{\left\{ \hspace{0.5mm} #1 : \hspace{0.5mm} #2 \right\}}
\newcommand{\function}[2]{:#1 \rightarrow #2}
\newcommand{\integers}{\mathbb{Z}}
\newcommand{\feq}{\stackrel{\mbox{\tiny def}}{=}}
\newcommand{\und}{\wedge}
\newcommand{\Or}{\vee}
\newcommand{\E}{\exists}
\newcommand{\A}{\forall}
\newcommand{\logic}{\mathcal{L}}
\newcommand{\fo}[1]{\mathrm{FO}^{#1}}
\newcommand{\conn}{\mathit{conn}}
\newcommand{\disc}{\mathit{disc}}
\newcommand{\thin}{\mathit{thin}}
\newcommand{\thick}{\mathit{thick}}
\newcommand{\norma}{\mathit{norma}}
\newcommand{\empt}{\mathit{empty}}
\newcommand{\compl}{\mathit{compl}}
\newcommand{\ac}[1]{\mbox{\rm AC$^{#1}$}}
\newcommand{\tc}[1]{\mbox{\rm TC$^{#1}$}}
\newcommand{\nc}[1]{\mbox{\rm NC$^{#1}$}}
\newcommand{\baru}{{\bar u}}
\newcommand{\barv}{{\bar v}}
\newcommand{\barw}{{\bar w}}
\newcommand{\tilu}{{\tilde u}}
\newcommand{\tilv}{{\tilde v}}
\newcommand{\barC}{{\bar C}}
\newcommand{\tilC}{{\tilde C}}
\DeclareMathOperator{\rank}{rk}
\newcommand{\dist}{\mathit{dist}}
\newcommand{\calB}{\mathcal{B}}
\newcommand{\hatT}{{\hat T}}
\newcommand{\Tau}{T}
\title{Bounds for the quantifier depth\\ in finite-variable logics: Alternation hierarchy}
\author{Christoph Berkholz\thanks{RWTH Aachen University, Institut f\"ur Informatik, D-52056 Aachen, Germany.}{}, 
Andreas Krebs\thanks{Wilhelm-Schickard-Institut, Universit\"at T\"ubingen, Sand 13, 72076 T\"ubingen, Germany.}{},
and Oleg Verbitsky\thanks{%
Humboldt-Universit\"at zu Berlin,
Institut f\"ur Informatik,
Unter den Linden 6,
D-10099 Berlin.
Supported by DFG grant VE 652/1--1.
On leave from the Institute for Applied Problems of Mechanics and Mathematics,
Lviv, Ukraine.}}
\date{}
\begin{document} 

\maketitle

\begin{abstract}
Given two structures $G$ and $H$ distinguishable in $\fo k$ (first-order logic with $k$ variables),
let $A^k(G,H)$ denote the minimum alternation depth of a $\fo k$ formula
distinguishing $G$ from $H$. Let $A^k(n)$ be the maximum value
of $A^k(G,H)$ over $n$-element structures. We prove the strictness
of the quantifier alternation hierarchy of $\fo 2$ in a strong quantitative
form, namely $A^2(n) > n/8-2$, which is tight up to a constant factor.
For each $k\ge2$, it holds that $A^k(n)>\log_{k+1}n-2$ even over colored trees, 
which is also tight up to a constant factor if $k\ge3$.
For $k\ge 3$ the last lower bound holds also over uncolored trees,
while the alternation hierarchy of $\fo 2$ collapses even over all uncolored graphs.

We also show examples of colored graphs $G$ and $H$ 
on $n$ vertices that can be distinguished in $\fo 2$
much more succinctly if the alternation
number is increased just by one: while in $\Sigma_{i}$ it is possible to distinguish $G$ from $H$ 
with bounded quantifier depth, in $\Pi_{i}$ this requires quantifier depth $\Omega(n^2)$.
The quadratic lower bound is best possible here because, if $G$ and $H$ can be distinguished
in $\fo k$ with $i$ quantifier alternations, this can be done with quantifier depth $n^{2k-2}$.
\end{abstract}

\section{Introduction}

Given structures $G$ and $H$ over vocabulary $\sigma$ and a first-order formula $\Phi$
over the same vocabulary, we say that $\Phi$ \emph{distinguishes} $G$ from $H$
if $\Phi$ is true on $G$ but false on $H$. By
\emph{alternation depth} of $\Phi$ we mean
the maximum length of a sequence of nested alternating quantifiers in $\Phi$.
Obviously, this parameter is bounded from above by the \emph{quantifier depth} of $\Phi$.
We will examine the maximum alternation depth and quantifier depth needed to distinguish two structures for restrictions of first-order logic and particular classes of structures.

For a fragment $\logic$ of first-order logic, by
$A_{\logic}(G,H)$ we denote the minimum alternation depth
of a formula $\Phi\in\logic$ distinguishing $G$ from $H$. 
Similarly, we let $D_{\logic}(G,H)$ denote the minimum quantifier depth of such $\Phi$.
Obviously, $A_{\logic}(G,H)\le D_{\logic}(G,H)$.
We define the \emph{alternation function} $A_{\logic}(n)$ to be equal to
the maximum value of $A_{\logic}(G,H)$ taken over all pairs of 
$n$-element structures $G$ and $H$ distinguishable in~$\logic$.

Our interest in this function is motivated
by the observation that if the quantifier alternation hierarchy of $\logic$
collapses, then $A_\logic(n)=O(1)$. 
More specifically, $A_\logic(n)\le a$ if the alternation hierarchy collapses 
to its $a$-th level $\Sigma_a\cup\Pi_a$.
Thus, showing that 
\begin{equation}
  \label{eq:Atoinf}
\lim_{n\to\infty}A_\logic(n)=\infty  
\end{equation}
is a way of proving that the hierarchy is strict. 

Note that Condition \refeq{Atoinf} is, in general, formally stronger than a hierarchy result.
For example, 
while the alternation hierarchy of first-order logic $\fo{}$
is strict over colored directed trees by Chandra and  Harel \cite{ChandraH82},
we have $A_{\fo{}}(n)=1$ for any class of structures over a fixed vocabulary.

\begin{figure}[b!]
\begin{center}
\renewcommand{\arraystretch}{1.2}
\begin{tabular}{l@{\extracolsep{5mm}}l@{\extracolsep{5mm}}l@{\extracolsep{5mm}}l}
\hline
\hline
Class of structures & Logic                      & Bounds for $A_\logic(n)$  & \\
\hline 
uncolored trees     & $\logic=\fo2$              & $=1$                 & Theorem \ref{thm:uncol-collapse} \\
\cline{2-4}
                    & $\logic=\fo{k}$, $k\geq 3$ & $>\log_{k+1} n-2$      & Theorem \ref{thm:hierarchy} \\
                    &                           & $<(k+3) \log_2 n$      & Theorem \ref{thm:log-upper-trees} \\
\hline
colored trees       & $\logic=\fo{k}$, $k\geq 2$ & $>\log_{k+1} n-2$      & Theorems \ref{thm:hierarchy2-color} and \ref{thm:hierarchy} \\
\cline{2-4}
                    & $\logic=\fo{k}$, $k\geq 3$ & $<(k+3) \log_2 n$      & Theorem \ref{thm:log-upper-trees} \\
\hline
uncolored graphs    & $\logic=\fo2$              & $=2$                & Theorem \ref{thm:uncol-collapse} \\
\cline{2-4}
                    & $\logic=\fo{k}$, $k\geq 3$ & $>\log_{k+1} n-2$      & Theorem \ref{thm:hierarchy} \\
                    &                            & $\le n^{k-1}+1$           & cf.~\cite{PikhurkoV11} \\
\hline
colored graphs      & $\logic=\fo2$              & $> n/8-2$             & Theorem \ref{thm:altcount} \\
                    &                            & $\le n+1$                 & cf.~\cite{ImmermanL90} \\
\cline{2-4}
                    & $\logic=\fo{k}$, $k\geq 3$ & $>\log_{k+1} n-2$      & Theorem \ref{thm:hierarchy} \\
                    &                            & $\le n^{k-1}+1$           & cf.~\cite{PikhurkoV11} \\
\hline 
\hline 
\end{tabular}
\end{center}
\vspace*{-5mm}
\caption{Results about $A_\logic(n)$}\label{fig:afunc}
\end{figure}

An example of this nature also exists when we restrict our logic to two variables:
While the alternation hierarchy of $\fo2[<]$
is strict over words in an infinite alphabet by Immerman and Weis \cite{WeisI09}, 
we have $A_{\fo2}(n)=1$ for words in any alphabet.

Moreover, the rate of growth of $A_{\logic}(n)$ can be naturally regarded as a quality 
of the strictness of the alternation hierarchy.
Note that any pair of structures $G$ and $H$ with $A_{\logic}(G,H)=a$
can serve as a certificate that the first $a$ levels of the alternation hierarchy
of $\logic$ are distinct.
Indeed, if $G$ is distinguished from $H$ by a formula $\Phi\in\logic$
of the minimum alternation depth $a$, then the set of structures $L=\setdef{S}{S\models\Phi}$
is not definable in $\logic$ with less than $a$ quantifier alternations.
Thus, the larger the value of $A_{\logic}(n)$ is, the more levels of the alternation
hierarchy can be separated by a certificate of size~$n$.

Results that we now know about the function $A_{\logic}(n)$ are displayed in Figure \ref{fig:afunc}.
The upper bound $A_{\fo k}(n)\le n^{k-1}+1$ holds true even for the quantifier depth.
It follows from the relationship of the distinguishability in $\fo k$ to the
$(k-1)$-dimensional color refinement (Weisfeiler-Lehman) procedure
discovered in \cite{ImmermanL90,CaiFI92} and the standard color stabilization argument; see \cite{PikhurkoV11}.
The logarithmic upper bound for trees (Theorem \ref{thm:log-upper-trees}) holds true also for the quantifier depth.

Additionally, in Section \ref{sec:lower} we show that the $\Sigma_i$ fragment of $\fo2$ is not only strictly more expressive than
the $\Sigma_{i-1}$ fragment but also more succinct
in the following sense:
There are colored graphs $G$ and $H$
on $n$ vertices such that they can be distinguished in $\Sigma_{i-1}\cap\fo2$
and, moreover, this is possible with
bounded quantifier depth in $\Sigma_{i}\cap\fo2$ while in $\Pi_{i}\cap\fo2$ this requires quantifier depth $\Omega(n^2)$.
The quadratic lower bound is best possible here because, if $G$ and $H$ can be distinguished
in $\fo k$ with $i$ quantifier alternations, this can be done with quantifier depth $n^{2k-2}$ 
(Section~\ref{sec:upper}).

\section{Preliminaries}\label{sec:prelim}

\subsection{Notation}

We consider first-order formulas only in the negation normal form
(i.e., any negation stands in front of a relation symbol
and otherwise only monotone Boolean connectives are used).
For each $i\ge1$, let $\Sigma_i$ (resp.\ $\Pi_i$) denote the set of
(not necessary prenex) formulas where any sequence of nested
quantifiers has at most $i-1$ quantifier alternations
and begins with $\exists$ (resp.\ $\forall$).
In particular, \emph{existential logic} $\Sigma_1$ consists of formulas
without universal quantification.
Up to logical equivalence, $\Sigma_i\cup\Pi_i\subset\Sigma_{i+1}\cap\Pi_{i+1}$.
By the \emph{quantifier alternation hierarchy} we mean
the interlacing chains $\Sigma_1\subset\Sigma_2\subset\ldots$
and $\Pi_1\subset\Pi_2\subset\ldots$. We are interested in the
corresponding fragments of a finite-variable logic.

As a short notation we use $D^k_\logic(G,H)=D_{\logic\cap\fo k}(G,H)$
and $A^k_\logic(G,H)=A_{\logic\cap\fo k}(G,\allowbreak H)$.
The subscript $\fo{}$ can be dropped; for example, $D^k(G,H)=D^k_{\fo{}}(G,H)$ and $A^k(n)=A^k_{\fo{}}(n)$.
Sometimes we will write $D^k_{\E}(G,H)$ in place of $D^k_{\Sigma_1}(G,H)$
or $D^k_{\A}(G,H)$ in place of $D^k_{\Pi_1}(G,H)$.

The universe of a structure $G$ will be denoted by $V(G)$,
and the number of elements in $V(G)$ will be denoted by $v(G)$.
Since binary structures can be regarded as vertex- and edge-colored directed graphs,
the elements of $V(G)$ will also be called vertices.
A vertex in a simple undirected graph is \emph{universal} if it is adjacent to all other vertices.

\subsection{The Ehrenfeucht-Fra\"iss\'e game}

The \emph{$k$-pebble Ehrenfeucht-Fra\"iss\'e game on structures $G$ and $H$},
is played by two players, Spoiler and Duplicator,
to whom we will refer as he and she respectively.
The players have equal sets of $k$ pairwise different pebbles.
A {\em round\/} consists of a move of Spoiler followed by a move of
Duplicator. Spoiler takes a pebble and puts it on a vertex in $G$ or in $H$.
Then Duplicator has to put her copy of this pebble on a vertex
of the other graph. Duplicator's objective is to keep the following
condition true after each round: the pebbling should determine a partial
isomorphism between $G$ and $H$.
The variant of the game where Spoiler starts playing  
in $G$ and is allowed to jump from one graph to the other less than $i$ times
during the game will be referred to as the \emph{$\Sigma_i$ game}.
In the \emph{$\Pi_i$ game} Spoiler starts in~$H$.

For each positive integer $r$, the $r$-round Ehrenfeucht-Fra\"iss\'e game (as well as its $\Sigma_i$ and 
$\Pi_i$ variants)
is a two-person game of perfect information with a finite number of positions.
Therefore, either Spoiler or Duplicator has a \emph{winning strategy}
in this game, that is, a strategy winning against every strategy of the opponent.

\begin{lemma}[e.g., \cite{WeisI09}]
  \label{lem:game}
$D^k_{\Sigma_i}(G,H)\le r$ if and only if
Spoiler has a winning strategy in the $r$-round $k$-pebble $\Sigma_i$ game on $G$ and~$H$.
\end{lemma}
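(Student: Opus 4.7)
The plan is to prove both directions of the equivalence by simultaneous induction on the number of rounds $r$, with a strengthening that keeps track of which pebbles are currently placed and what partial map they induce. For a position $(\bar a, \bar b)$ in which pebbles $p_1,\dots,p_j$ sit on vertices $\bar a = (a_1,\dots,a_j)$ in $G$ and $\bar b = (b_1,\dots,b_j)$ in $H$, I will prove: $G$ and $H$ with these pebblings are distinguished by a $\Sigma_i \cap \fo k$ formula of quantifier depth $\le r$ (whose free variables correspond to the placed pebbles) if and only if Spoiler wins the remaining $r$-round $k$-pebble $\Sigma_i$ game from this position. The base case $r=0$ reduces to the observation that $(\bar a, \bar b)$ fails to be a partial isomorphism iff there is a quantifier-free formula distinguishing the two pebbled structures; such a formula is simultaneously in $\Sigma_i$ and $\Pi_i$ for every $i$, so no alternation budget is consumed.

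For the forward direction, assume $\Phi \in \Sigma_i \cap \fo k$ has quantifier depth at most $r$ and $G \models \Phi$ while $H \not\models \Phi$. I will push negations in and induct on the structure of $\Phi$. For Boolean connectives one simply inspects which conjunct/disjunct witnesses the difference. For $\Phi = \exists x_\ell\, \psi$ with $\ell \le k$, Spoiler picks pebble $p_\ell$, places it in $G$ on a witness $a$ for $\psi$; for every reply $b$ in $H$ we have $\psi$ still distinguishing the enlarged pebbled structures, and by induction Spoiler wins the remaining $(r-1)$-round game. Crucially, $\psi$ is again a $\Sigma_i$ formula, so Spoiler has not yet needed to jump. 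For $\Phi = \forall x_\ell\, \psi$, since $H \not\models \Phi$, Spoiler jumps to $H$ and pebbles a counterexample $b$; after this move the subformula $\psi$ belongs to $\Pi_{i-1}$, so Spoiler has used one alternation, leaving a budget of $i-1$ jumps. Thus the $i-1$ alternations of the $\Sigma_i$ formula match the $i-1$ permitted jumps in the game.

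For the backward direction, I assume Spoiler has an $r$-round winning strategy from position $(\bar a,\bar b)$ in the $\Sigma_i$ game and produce a formula. If the strategy tells Spoiler to place pebble $p_\ell$ in $G$ on some $a$, then for every possible reply $b \in V(H)$ Spoiler wins the remaining game, which is still a $\Sigma_i$ game (no jump). By induction there is a $\Sigma_i \cap \fo k$ formula $\psi_b(x_1,\dots,x_\ell)$ of quantifier depth $\le r-1$ distinguishing the resulting position, and
\[
\Phi \;=\; \exists x_\ell \bigwedge_{b \in V(H)} \psi_b
\]
works. If instead Spoiler's strategy jumps to $H$, placing $p_\ell$ on some $b \in V(H)$, then the remaining game is a $\Pi_{i-1}$ game from the reverse side, and by the symmetric statement (proved in tandem by the same induction) each Duplicator reply $a \in V(G)$ gives a $\Pi_{i-1} \cap \fo k$ distinguishing formula $\psi_a$; taking $\Phi = \forall x_\ell \bigvee_{a \in V(G)} \psi_a$ produces a $\Sigma_i$ formula of quantifier depth $\le r$ distinguishing the original position.

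The main technical points are bookkeeping rather than deep obstacles. Two deserve attention: first, variable/pebble reuse — when Spoiler picks up a pebble already in play and moves it, the corresponding quantifier $\exists x_\ell$ rebinds the variable $x_\ell$, so the inductive hypothesis must be stated for pebbled positions that are indexed by which of the $k$ pebbles are currently in use, not by the total number of moves. Second, the alternation accounting: one must verify that in the formula built in the backward direction, a conjunction of $\Sigma_i$ formulas stays in $\Sigma_i$ (no new alternation is introduced at a Boolean step), and that the top-level quantifier in the jump case genuinely introduces exactly one alternation relative to the $\Pi_{i-1}$ subformulas. Both are straightforward once the inductive invariant is stated cleanly, so the proof reduces to carefully maintaining this invariant through the induction.
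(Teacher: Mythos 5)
The paper does not prove this lemma at all: it is stated as a known background fact and delegated to the citation of Weis and Immerman, so there is no in-paper argument to compare against. Your proof is the standard back-and-forth induction that underlies that cited result, and it is correct in substance: the strengthened inductive statement over pebbled positions, the base case via quantifier-free formulas, the witness/counterexample translation between quantifiers and pebble moves, and the formulas $\exists x_\ell\bigwedge_b\psi_b$ and $\forall x_\ell\bigvee_a\psi_a$ in the converse direction are exactly what is needed. Two bookkeeping points are stated loosely and worth tightening. First, under the paper's syntactic definition of $\Sigma_i$ (nested quantifier sequences begin with $\exists$), the subformula $\psi$ of $\exists x_\ell\,\psi$ need not itself lie in $\Sigma_i$ — it may begin with $\forall$ and then lies in $\Pi_{i-1}$; the invariant you actually need is that the remaining alternation budget of the subformula matches the remaining jump budget of the game together with the side Spoiler currently occupies, and your induction should be phrased over that pair rather than over membership in $\Sigma_i$ per se. Second, your closing remark that the top-level $\forall$ in the jump case ``introduces exactly one alternation relative to the $\Pi_{i-1}$ subformulas'' is backwards: a $\forall$ atop $\Pi_{i-1}$ subformulas introduces no new alternation there; the alternation is charged where that $\forall$ sits beneath the preceding existential block, which is precisely where the jump is charged in the game. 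Neither point affects the validity of the argument.
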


\subsection{The lifting construction}\label{ss:lifting}

Note that separation of the ground floor of the alternation hierarchy for $\fo2$ costs nothing.
We can take graphs $G$ and $H$ with three isolated vertices each, color one vertex of $G$ 
in red, and color the other vertices of $G$ and all vertices of $H$ in blue.
Obviously, $D^2_\E(G,H)=1$ while $D^2_\A(G,H)=\infty$.
It turns out that any separation example can be lifted to higher floors in a rather
general way. Similarly, given a sequence of examples of $n$-vertex graphs $G$ and $H$
with $D^2_\E(G,H)=\Omega(n^2)$ and $D^2_\A(G,H)=O(1)$, in Section \ref{sec:lower} we will be able to lift it
to any number of quantifier alternations.

The lifting gadget provided by Lemma \ref{lem:lifting} below is a reminiscence
of the classical construction designed by Chandra and Harel to prove
the strictness of the first-order alternation hierarchy. The
Chandra-Harel construction 
is applicable to other logics (see, e.g., \cite[Section 8.6.3]{EbbinghausF})
and can be used as
a general scheme for obtaining hierarchy results.
This approach was also used by Oleg Pikhurko (personal communication, 2007)
to construct, for each $i$, a sequence of pairs of trees $G_n$ and $H_n$
such that $D_{\Sigma_i}(G_n,H_n)=O(1)$ while $D_{\Pi_i}(G_n,H_n)\to\infty$
as $n\to\infty$.

\begin{figure}
\centering
\begin{tikzpicture}
  \begin{scope}[scale=2]
    \node at (-1.1 cm,-.3 cm) {$G_1$};
    \matrix[nodes={draw},column sep=0.5cm,row sep=1cm] {
    \node[circle, inner sep=1mm] {$G_0$}; &
    \node[circle, inner sep=1mm] {$H_0$}; &
    \node[circle, inner sep=1mm] {$H_0$};\\
&\node[vertex] (n) {};&\\
    };
\coordinate (a) at (-.84,.13);
\coordinate (b) at (.84,.13);
\coordinate (x) at (-.4,.05);
\coordinate (y) at (-.15,.05);
\coordinate (z) at (.4,.05);
\coordinate (w) at (.15,.05);
\draw (n)--(a) (n)--(b) (n)--(x) (n)--(z) (n)--(y) (n)--(w);
  \end{scope}
  \begin{scope}[scale=2,xshift=30mm]
    \node at (-1.1 cm,-.3 cm) {$H_1$};
    \matrix[nodes={draw},column sep=0.5cm,row sep=1cm] {
    \node[circle, inner sep=1mm] {$H_0$}; &
    \node[circle, inner sep=1mm] {$H_0$}; &
    \node[circle, inner sep=1mm] {$H_0$};\\
&\node[vertex] (n) {};&\\
    };
\coordinate (a) at (-.84,.13);
\coordinate (b) at (.84,.13);
\coordinate (x) at (-.4,.05);
\coordinate (y) at (-.15,.05);
\coordinate (z) at (.4,.05);
\coordinate (w) at (.15,.05);
\draw (n)--(a) (n)--(b) (n)--(x) (n)--(z) (n)--(y) (n)--(w);
  \end{scope}

\newcommand{\hi}{.9}
\newcommand{\wi}{.6}
\newcommand{\tri}[2]{
\node[vertex] (g#2) at (0,0) {};
\node at (0,.6) {$#1$};
\draw (g#2) -- (-\wi,\hi) -- (\wi,\hi) -- (g#2);
}
\newcommand{\triii}[3]{
    \matrix[ampersand replacement=\&,column sep=3mm,row sep=7mm] {
   \tri{#1}{1}  \&  \tri{#2}{2}  \& \tri{#3}{3} \\
\&\node[vertex] (g) {};\&\\
    };
\draw (g) -- (g1) (g) -- (g2) (g) -- (g3) ;
}
  \begin{scope}[yshift=-30mm]
    \node at (-2.5 cm,0 cm) {$G_{i+1}$};
\triii{G_i}{G_i}{H_i}
  \end{scope}
  \begin{scope}[xshift=60mm,yshift=-30mm]
    \node at (-2.5 cm,0 cm) {$H_{i+1}$};
\triii{G_i}{G_i}{G_i}
  \end{scope}
\end{tikzpicture}
\caption{The lifting construction.}
\label{fig:lifting}
\end{figure}

Given colored graphs $G_0$ and $H_0$, we recursively construct graphs $G_i$ and $H_i$
as shown in Fig.~\ref{fig:lifting}. $H_1$ consists of three disjoint copies of $H_0$
and an extra universal vertex, that will be referred to as the root vertex of $H_1$.
The root vertex is colored in a new color absent in $G_0$ and $H_0$, say, in gray.
The graph $G_1$ is constructed similarly but, instead of three $H_0$-branches,
it has two $H_0$-branches and one $G_0$-branch.
Suppose that $i\ge1$ and the rooted graphs $G_{i}$ and $H_{i}$ are already constructed.
The graph $H_{i+1}$ consists of three disjoint copies of $G_{i}$
and the gray root vertex adjacent to the root of each $G_{i}$-part.
The graph $G_{i+1}$ is constructed similarly but, instead of three $G_{i}$-branches,
it has two $G_{i}$-branches and one $H_{i}$-branch.

We will say that Spoiler plays \emph{continuously} if, after each of his moves,
the two pebbled vertices are adjacent.

\begin{lemma}\label{lem:lifting}
  Assume that Spoiler has a continuous strategy allowing him to win
the 2-pebble $\Sigma_1$ game on $G_0$ and $H_0$ in $r$ moves. Then, for each $i\ge1$,
\begin{enumerate}
\item 
$D^2_{\Sigma_i}(G_i,H_i) < r+i$;
\item 
$D^2_{\Sigma_i}(G_i,H_i)\ge D^2_{\Pi_{i+1}}(G_i,H_i)\ge D^2_\E(G_0,H_0)$;
\item 
$D^2_{\Pi_i}(G_i,H_i)=\infty$;
\item 
If, moreover, Spoiler has a continuous strategy allowing him to win
the 2-pebble $\Sigma_2$ game on $G_0$ and $H_0$ in $s$ moves, then
$D^2_{\Sigma_{i+1}}(G_i,H_i) < s+i$.
\end{enumerate}
\end{lemma}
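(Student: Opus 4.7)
The plan is to prove items 1--4 by a joint induction on $i$. Throughout we use three structural facts about the construction: (i)~non-root vertices are never gray, since gray is a new colour absent from $G_0$ and $H_0$; (ii)~distinct branches at any level are non-adjacent except through their shared gray root; and (iii)~$G_{i+1}$ and $H_{i+1}$ differ only in that $G_{i+1}$ has one $H_i$-branch where $H_{i+1}$ has a third $G_i$-branch.

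For item~1, the base case $i=1$ runs as follows. Spoiler plays his continuous $\Sigma_1$-strategy on $(G_0,H_0)$ inside the $G_0$-branch of $G_1$; Duplicator's first reply must lie in some $H_0$-branch of $H_1$ (her vertex cannot be gray), and continuity combined with the 2-pebble partial-isomorphism condition traps her inside that single branch. The game therefore reduces to the $\Sigma_1$-game on $(G_0,H_0)$, which Spoiler wins in $r$ moves. For the inductive step Spoiler first pebbles the root $v^\star$ of the $H_i$-subtree of $G_{i+1}$; Duplicator must reply with some gray vertex $u^\star$ of $H_{i+1}$. If $u^\star$ is the root of one of the $G_i$-subtrees, the rest of the play becomes the $\Sigma_{i+1}$-game on $(H_i,G_i)$, which is bounded by $D^2_{\Sigma_{i+1}}(H_i,G_i)=D^2_{\Pi_{i+1}}(G_i,H_i)\le D^2_{\Sigma_i}(G_i,H_i)<r+i$ using the inductive item~2 and item~1. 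If $u^\star$ is the root of $H_{i+1}$ itself, Spoiler exploits in one or two further moves the ``upward'' gray neighbour (the root of $G_{i+1}$) that $u^\star$ lacks, forcing Duplicator into a position with no valid reply. In both cases the total number of moves is $<r+i+1$. Item~4 follows the same template using the stronger continuous $\Sigma_2$-strategy at the bottom; one extra alternation is available at the deepest level, yielding the bound $s+i$.

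For item~3 we construct a Duplicator strategy for the $\Pi_i$-game on $(G_i,H_i)$ that survives arbitrarily many rounds. The invariant she maintains is that the pebbled pairs form a partial isomorphism and that each pebble is either at a gray root or sits in a matched pair of sub-branches whose sub-strategy she is running consistently. The crucial resource is the surplus branch: $H_i$ has three $G_{i-1}$-branches while $G_i$ has only two, so when Spoiler moves into a previously unused branch, Duplicator can always reassign to a fresh partner. If Spoiler, after an alternation to the $G_i$-side, descends into the exceptional $H_{i-1}$-subtree of $G_i$, Duplicator invokes the inductive item~3 Duplicator strategy for $(G_{i-1},H_{i-1})$, pairing that $H_{i-1}$-branch with one of the free $G_{i-1}$-branches of $H_i$; the remaining alternation budget is exactly what the sub-game requires. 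The first inequality of item~2 is immediate from the inclusion $\Sigma_i\subseteq\Pi_{i+1}$ (a $\Sigma_i$ formula is also a $\Pi_{i+1}$ formula with an empty outermost universal block). The second inequality follows by the same Duplicator strategy allowed one extra alternation, which at the deepest recursion level reduces the play to the $\Sigma_1$-game on $(G_0,H_0)$, on which she survives $D^2_\E(G_0,H_0)-1$ rounds.

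The main obstacle is the joint bookkeeping in items 2 and 3: with only two pebbles and up to $i-1$ alternations, Duplicator's branch correspondences must remain consistent whenever Spoiler alternates sides, so that the sub-strategies she has already been running can be continued rather than restarted; in particular, she must never be forced to pair the exceptional $H_{i-1}$-branch of $G_i$ against a pebble previously committed to an identity-type match elsewhere. The surplus branch on the $H_i$-side is what guarantees that a consistent choice always exists, but verifying this at every level and under every alternation pattern accounts for the bulk of the case analysis.
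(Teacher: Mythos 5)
Your overall architecture matches the paper's: recursive Spoiler strategies for items 1 and 4, recursive level- and branch-respecting Duplicator strategies for items 2 and 3, with the surplus $G_i$-copies on the $H_{i+1}$-side (and the two $H_0$-copies in $G_1$) as the key resource. Your bookkeeping for the inductive step of item 1 via $D^2_{\Sigma_{i+1}}(H_i,G_i)=D^2_{\Pi_{i+1}}(G_i,H_i)\le D^2_{\Sigma_i}(G_i,H_i)$ is a legitimate variant of the paper's (which instead has Spoiler jump immediately into the $G_i$-branch of $H_{i+1}$ and play the $\Sigma_i$ game on $(G_i,H_i)$); both yield the same count.

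There is, however, a genuine gap in your treatment of Duplicator's off-script replies in the inductive step of item 1 (and hence item 4). First, your case analysis for $u^\star$ is incomplete: besides the root of $H_{i+1}$ and the roots of its $G_i$-branches, Duplicator may answer with any gray vertex at depth $\ge 2$, and these cases are not covered. Second, and more seriously, your claim that when $u^\star$ is the root of $H_{i+1}$ Spoiler wins ``in one or two further moves'' by exploiting a missing gray neighbour is false: the root of $H_{i+1}$ does have gray neighbours (the roots of its three $G_i$-branches), and with two pebbles all gray vertices look locally alike, so no short local argument finishes the game. What actually distinguishes gray vertices at different levels is their distance to the coloured leaves; the correct (and necessary) argument is the paper's: Spoiler pebbles a shortest path upwards in $G_{i+1}$ or $H_{i+1}$ and wins on a colour mismatch once one side reaches a non-gray vertex, which costs up to $i$ extra rounds, not one or two. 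The final bound $<r+i+1$ survives because $1+i\le r+i$, but as written your argument for this branch of the case analysis does not go through. A smaller omission of the same flavour: in the recursive subgame you should also note (as the paper does) that Duplicator's only escape from the designated branches is to pebble a gray vertex below, which is dispatched by the same walk-upwards argument.
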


\begin{proof}
{\bf 1.}  
In the base case of $i=1$ we have to show that Spoiler is able to win the $\Sigma_1$ game
on $G_1$ and $H_1$ in $r$ moves. He forces the $\Sigma_1$ game
on $G_0$ and $H_0$ by playing continuously inside the $G_0$-part of $G_1$ and wins by assumption.

Now, we recursively describe a strategy for Spoiler in the $\Sigma_{i+1}$ game
on $G_{i+1}$ and $H_{i+1}$ and inductively prove that it is winning.
For each $i$, the strategy will be continuous, and the vertex pebbled in the
first round will be adjacent to the root. Note that this is true in the base case.

In the first round Spoiler pebbles the root of the $H_i$-branch of $G_{i+1}$.
Duplicator is forced to pebble the root of one of the $G_i$-branches of $H_{i+1}$.
Indeed, if she pebbles a gray vertex at the different distance from the root of $H_{i+1}$,
then Spoiler pebbles a shortest possible path upwards in $G_{i+1}$ or $H_{i+1}$
and wins once he reaches a non-gray vertex. In the second round Spoiler jumps to
this $G_i$-branch and, starting from this point, forces the $\Sigma_{i}$ game
on $G_{i}$ and $H_{i}$ by playing recursively and, hence, continuously.
The only possibility for Duplicator to avoid the recursive play and not to
lose immediately is to pebble a gray vertex below. In this case Spoiler wins
in altogether $i+1$ moves by pebbling a path upwards in the graph where he stays,
as already explained. If the game goes recursively, then by the induction assumption
Spoiler needs less than $1+r+i$ moves to win.

{\bf 2.}  
In the base case of $i=1$ we have to design a strategy for Duplicator
in the $\Pi_2$ game on $G_1$ and $H_1$. First of all, Duplicator pebbles
the gray vertex always when Spoiler does so. Furthermore, whenever
Spoiler pebbles a vertex in an $H_0$-branch of $G_1$ or $H_1$, Duplicator pebbles the same vertex
in an $H_0$-branch of the other graph. It is important that, if the pebbles are
in two different $H_0$-branches of $G_1$ or $H_1$, Duplicator has a possibility
to pebble different $H_0$-branches in the other graph. It remains to describe
Duplicator's strategy in the case that Spoiler moves in the $G_0$-branch of $G_1$.
Note that once Spoiler does so, he cannot change the graph any more.
In this case, Duplicator chooses a free $H_0$-branch in $H_1$ and follows her
optimal strategy in the $\Sigma_1$ game on $G_0$ and $H_0$. Since the gray vertex
is universal in both graphs and the $G_0$- and $H_0$-branches are isolated from
each other, Spoiler wins only when he wins the $\Sigma_1$ game on $G_0$ and $H_0$,
which is possible in $D^2_\E(G_0,H_0)$ moves at the earliest.

In the $\Pi_{i+2}$ game on $G_{i+1}$ and $H_{i+1}$ Duplicator plays similarly.
She always respects the root vertex, the $G_i$-branches, and takes care that
the pebbled vertices are either in the same or in distinct $G_i$-branches
in both graphs. Once Spoiler moves in the $H_i$-branch of $G_{i+1}$,
Duplicator invokes her optimal strategy in the $\Sigma_{i+1}$ game on $H_i$ and $G_i$,
what is the same as the $\Pi_{i+1}$ game on $G_i$ and $H_i$.
There is no other way for Spoiler to win than to win this subgame.
By the induction assumption, this takes at least $D^2_\E(G_0,H_0)$ moves.

{\bf 3.}  
By induction on $i$, we show that Duplicator has a strategy allowing her
to resist arbitrarily long in the $\Pi_i$ game on $G_i$ and $H_i$.
An important feature of the strategy is that Duplicator will always respect
the distance of a pebbled gray vertex from the root. 
In the base case of $i=1$, such a strategy exists because in $G_1$ there are
two copies of $H_0$, where Duplicator can mirror Spoiler's moves.
In the $\Pi_{i+1}$ game on $G_{i+1}$ and $H_{i+1}$, Duplicator makes use of
the existence of two copies of $G_i$ in both graphs. Whenever Spoiler
pebbles the root vertex or moves in a $G_i$-part in any of $G_{i+1}$ and $H_{i+1}$,
Duplicator mirrors this move in the other graph. Whenever Spoiler
moves for the first time in the $H_i$-part of $G_{i+1}$, Duplicator responds
in a free $G_i$-part of $H_{i+1}$ according to her level-preserving strategy for 
the $\Pi_i$ game on $G_i$ and $H_i$, that exists by the induction assumption.
When Spoiler moves in the $H_i$-part also with the other pebble,
Duplicator continues playing in the same $G_i$-part of $H_{i+1}$
following the same strategy.

{\bf 4.}  
Spoiler has a recursive winning strategy for the $\Sigma_{i+1}$ game on $G_i$ and $H_i$
similarly to the proof of part~1.
\end{proof}

\section{Alternation function for $\fo k$ over trees}

\begin{theorem}\label{thm:hierarchy2-color}
$A^2(n)>\log_3n-2$ over colored trees.
\end{theorem}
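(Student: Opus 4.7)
My plan is to apply Lemma~\ref{lem:lifting} to a trivial base and to count vertices, padding at the end to reach exactly $n$ vertices.

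I will take $G_0$ to be a single red vertex and $H_0$ a single blue vertex, both of which are (trivially) colored trees. Spoiler wins the 2-pebble $\Sigma_1$ game on them in a single move by pebbling the red vertex, and this strategy is vacuously continuous, so the hypotheses of Lemma~\ref{lem:lifting} are met with $r=1$. A straightforward induction on $i$ will show that the graphs $G_i,H_i$ produced by the lifting construction are colored trees, because at each step one joins three rooted subtrees by a new gray root vertex adjacent only to their roots. An easy recurrence gives $v(G_i)=v(H_i)=n_i$ with $n_0=1$ and $n_{i+1}=3n_i+1$, hence $n_i=(3^{i+1}-1)/2$.

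From part~3 of Lemma~\ref{lem:lifting} I have $D^2_{\Pi_i}(G_i,H_i)=\infty$. Any $\fo 2$-formula of alternation depth at most $i-1$ is, up to logical equivalence, in $\Sigma_{i-1}\cup\Pi_{i-1}\subseteq\Pi_i$; hence no such formula distinguishes $G_i$ from $H_i$. Since part~1 of the same lemma yields $D^2_{\Sigma_i}(G_i,H_i)<i+1<\infty$, a distinguishing $\fo 2$-formula does exist. Therefore $A^2(G_i,H_i)\ge i$.

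To produce an $n$-vertex example for every $n$, I choose the largest $i$ with $n_i\le n$, so that $n<n_{i+1}<3^{i+2}$ and consequently $i>\log_3 n-2$. If $n_i<n$, I pad $G_i$ and $H_i$ by attaching $n-n_i$ leaves of a fresh color to the gray root, obtaining colored trees on exactly $n$ vertices. The only technical point requiring attention is that this padding must not decrease the alternation lower bound, i.e., Duplicator must still be able to resist arbitrarily long in the $\Pi_i$ game on the padded pair. I expect this to be routine: she extends her strategy from the proof of part~3 of Lemma~\ref{lem:lifting} by mirroring any Spoiler move on a padding leaf with a move on an unpebbled padding leaf of the other graph, which is possible because the padding leaves are pairwise indistinguishable and at most two pebbles are ever in play. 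This preserves $D^2_{\Pi_i}=\infty$ on the padded pair, and hence $A^2(n)\ge i>\log_3 n-2$.
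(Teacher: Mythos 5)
Your proposal is correct and follows essentially the same route as the paper: apply the lifting construction of Lemma~\ref{lem:lifting} to a single red vertex versus a single blue vertex, use parts 1 and 3 to get $D^2_{\Sigma_i}(G_i,H_i)\le i$ and $D^2_{\Pi_i}(G_i,H_i)=\infty$, solve the recurrence $n_{i+1}=3n_i+1$, and pad with leaves at the root for intermediate values of $n$. Your extra remarks --- deducing $A^2(G_i,H_i)\ge i$ from the inclusion of low-alternation formulas in $\Pi_i$, and checking that Duplicator's strategy survives the padding --- only make explicit details the paper leaves implicit.
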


\begin{figure}
\centering
 \begin{tikzpicture}[scale=.7]

\newcommand{\hlevel}{1}

\newcommand{\hone}[2]{
\path (#1,#2) node[vertex] (r) {}
     +(-.5,\hlevel) node[avertex] (r1)  {} edge (r) 
       +(0,\hlevel) node[avertex] (r2)  {} edge (r) 
      +(.5,\hlevel) node[avertex] (r3)  {} edge (r);
}
\newcommand{\gone}[2]{
\path (#1,#2) node[vertex] (r) {}
     +(-.5,\hlevel) node[rvertex] (r1)  {} edge (r) 
       +(0,\hlevel) node[avertex] (r2)  {} edge (r)
      +(.5,\hlevel) node[avertex] (r3)  {} edge (r);
}
\newcommand{\gtwo}[2]{
\node[vertex] (rr) at (#1,#2) {};
\draw (rr) -- ++(-1.5,\hlevel) (rr) -- ++(0,\hlevel) (rr) -- ++(1.5,\hlevel);
\gone{#1}{#2+\hlevel}
\gone{#1-1.5}{#2+\hlevel}
\hone{#1+1.5}{#2+\hlevel}
}
\newcommand{\htwo}[2]{
\node[vertex] (rr) at (#1,#2) {};
\draw (rr) -- ++(-1.5,\hlevel) (rr) -- ++(0,\hlevel) (rr) -- ++(1.5,\hlevel);
\gone{#1}{#2+\hlevel}
\gone{#1-1.5}{#2+\hlevel}
\gone{#1+1.5}{#2+\hlevel}
}

\begin{scope}
\node at (-1cm,0cm) {$G_1$};
  \gone{0}{0}
\end{scope}
\begin{scope}[xshift=2 cm]
\node at (-1cm,0cm) {$H_1$};
  \hone{0}{0}
\end{scope}

\begin{scope}[xshift=5.5cm]
\node at (-1cm,0cm) {$G_2$};
  \gtwo{0}{0}
\end{scope}

\begin{scope}[xshift=10.5cm]
\node at (-1cm,0cm) {$H_2$};
  \htwo{0}{0}
\end{scope}

\newcommand{\gthree}[2]{
\node[vertex] (rrr) at (#1,#2) {};
\draw (rrr) -- ++(-4.5,\hlevel) (rrr) -- ++(0,\hlevel) (rrr) -- ++(4.5,\hlevel);
\gtwo{#1}{#2+\hlevel}
\gtwo{#1-4.5}{#2+\hlevel}
\htwo{#1+4.5}{#2+\hlevel}
}
\newcommand{\hthree}[2]{
\node[vertex] (rrr) at (#1,#2) {};
\draw (rrr) -- ++(-4.5,\hlevel) (rrr) -- ++(0,\hlevel) (rrr) -- ++(4.5,\hlevel);
\gtwo{#1}{#2+\hlevel}
\gtwo{#1-4.5}{#2+\hlevel}
\gtwo{#1+4.5}{#2+\hlevel}
}
\begin{scope}[xshift=6cm,yshift=-5cm]
\node at (-1cm,-.2cm) {$G_3$};
  \gthree{0}{0}
\end{scope}
\begin{scope}[xshift=6 cm,yshift=-9.5cm]
\node at (-1cm,-.2cm) {$H_3$};
  \hthree{0}{0}
\end{scope}
 \end{tikzpicture}
\caption{Proof of Theorem \protect\ref{thm:hierarchy2-color}}
\label{fig:trees}
\end{figure}

\begin{proof}
Applying the lifting construction described in Section \ref{sec:prelim} to 
the pair of single-vertex graphs
$$G_0=
\begin{tikzpicture}
  \node[rvertex] at (0,0) {};
\end{tikzpicture}
\text{\quad and\quad}
H_0=
\begin{tikzpicture}
  \node[avertex] at (0,0) {};
\end{tikzpicture},
$$
we obtain the sequence of pairs of colored trees $G_i$ and $H_i$ with $v(G_i)=v(H_i)$
as shown in Fig.~\ref{fig:trees}.
For $i\ge1$, we have
$$
D^2_{\Sigma_i}(G_i,H_i)\le i
$$
by part 1 of Lemma \ref{lem:lifting} and
$$
D^2_{\Pi_i}(G_i,H_i)=\infty
$$
by part 3 of this lemma. It follows that $A^2(n_i)\ge i$ for $n_i=v(G_i)$.
Note that $n_{i}=3n_{i-1}+1$, where $n_0=1$. Therefore
$n_{i}=3^i+\frac{3^i-1}2$, which implies that $A^2(n_i)>\log_3n_i-1$. 

Consider now an arbitrary $n$ and suppose that $n_i\le n<n_{i+1}$, i.e.,
$n_i\le n\le3n_i$. We can increase the number of vertices in $G_i$ and $H_i$
to $n$ by attaching $n-n_i$ new gray leaves at the root. Since this
does not change the parameters $D^2_{\Sigma_i}(G_i,H_i)$ and $D^2_{\Pi_i}(G_i,H_i)$,
we get $A^2(n)\ge A^2(n_i)>\log_3n-2$.
\end{proof}

Theorem \ref{thm:hierarchy2-color} generalizes to any $k$-variable logic
and, if $k>2$, then no vertex coloring is needed any more.

\begin{theorem}\label{thm:hierarchy}
If $k\ge3$, then
$A^k(n)>\log_{k+1}n-2$ over uncolored trees.
\end{theorem}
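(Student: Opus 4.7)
The plan is to imitate the proof of Theorem \ref{thm:hierarchy2-color} with two modifications suited to the $k$-pebble, uncolored setting. First I generalize the lifting construction to $k$ pebbles by raising the branching factor from $3$ to $k+1$: redefine $G_{i+1}$ to consist of a fresh root adjacent to the roots of $k$ copies of $G_i$ and one copy of $H_i$, while $H_{i+1}$ consists of a fresh root adjacent to the roots of $k+1$ copies of $G_i$. The branching factor $k+1$ is chosen so that in the $\Pi_i$ game Duplicator always has at least one unoccupied $G_i$-branch in $H_{i+1}$ to mirror Spoiler's moves into, no matter how his $k$ pebbles are distributed. Replaying the inductions of parts 1 and 3 of Lemma \ref{lem:lifting} with the new branching then gives $D^k_{\Sigma_i}(G_i,H_i)=O(i)$ and $D^k_{\Pi_i}(G_i,H_i)=\infty$.

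Second, since the trees must be uncolored, I replace the gray root color by a structural tag: attach to each root a fixed pendant gadget, for instance $d$ pendant leaves with $d$ larger than the degree of any vertex created by the recursion. The root then becomes the unique vertex of sufficiently high degree in the entire tree. The tag is an isomorphic subtree in both $G_i$ and $H_i$, so Duplicator can mirror any Spoiler move inside it verbatim without committing to a branch, preserving her $\Pi_i$ strategy; and because $k\ge 3$ Spoiler has enough pebbles to identify the tag. Its constant size does not affect the asymptotics. For the base case, take $G_0$ and $H_0$ to be small uncolored rooted trees on which Spoiler has a continuous winning strategy in the $k$-pebble $\Sigma_1$ game, for instance stars $K_{1,k-1}$ and $K_{1,k}$ padded to equal size by a dummy leaf.

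The vertex counts then satisfy $n_{i+1}=(k+1)n_i+O(1)$, hence $n_i=\Theta((k+1)^i)$. Combined with the generalized lifting this yields $A^k(n_i)\ge i>\log_{k+1}n_i-O(1)$, and the bound is extended to arbitrary $n$ by attaching a few extra pendants at the root exactly as in the proof of Theorem \ref{thm:hierarchy2-color}, without disturbing the game values.

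The main obstacle will be checking that the uncolored tag is benign for \emph{both} players simultaneously: Spoiler must not gain any new leverage from pebbling inside the tag (which would spoil the $\Pi_i$ lower bound), while Duplicator must still be forced to respect the root-to-root correspondence (which is what the gray color bought us). A depth-one tag with many leaves accomplishes both, because Spoiler's moves inside the tag have a unique local mirror image and the tag is globally unique; the delicate part of the write-up will be to formalize this and propagate it through the inductive strategy descriptions.
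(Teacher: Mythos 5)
Your first step --- raising the branching factor of the lifting construction from $3$ to $k+1$ so that Duplicator always has a spare $G_i$-branch in the $\Pi_i$ game --- is exactly what the paper does, and the vertex count $n_i=\Theta((k+1)^i)$ then gives the stated bound in the same way as in Theorem \ref{thm:hierarchy2-color}. The gap is in the decoloring step, which is where essentially all of the work in this theorem lives. Your tag marks roots by giving them degree $d>k$, but a $k$-pebble Spoiler cannot certify that a vertex has degree $\ge d$ when $d>k$: he can exhibit at most $k-1$ neighbours at a time, and Duplicator is free to answer a high-degree root with, say, the centre of one of the base-level stars (degree about $k$), matching every neighbour Spoiler shows her with a pendant leaf or with her single non-leaf neighbour. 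The assertion that ``because $k\ge3$ Spoiler has enough pebbles to identify the tag'' is precisely the claim that needs a proof, and nothing in the proposal supplies one. The paper avoids this trap by \emph{not} relying on large degrees at all: it starts from two $3$-vertex rooted trees (a cherry and a path, which are isomorphic as unrooted graphs), so that the formerly red, blue, and gray vertices have degrees $3$, $2$, and $\ge k+1$ respectively, and then gives a hand-crafted $3$-pebble endgame --- pebble two neighbours of the degree-$3$ vertex, slide the pebble to the third neighbour, and win on a pairwise-distance discrepancy among the three images --- together with a climbing argument that punishes Duplicator for answering at the wrong level. Some argument of this kind (distance- or height-based, not degree-counting) is unavoidable, and your write-up defers it entirely.

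There are also two smaller defects you would need to repair. First, the base pair is underspecified: padding $K_{1,k-1}$ with a dummy leaf attached at the centre yields $K_{1,k}$, i.e.\ $G_0\cong H_0$; you must attach it to a leaf, and then the resulting degree-$2$ vertex interacts with the endgame analysis above. Second, your text wavers between attaching the tag to \emph{every} root produced by the recursion and claiming the root is ``the unique vertex of sufficiently high degree in the entire tree''; these are incompatible, and the construction needs the internal roots marked as well, since Spoiler must be able to force Duplicator to respect the branch decomposition at every level, not just at the top. Finally, your larger base graphs and tags only give $A^k(n_i)\ge i>\log_{k+1}n_i-O(1)$ rather than the stated additive constant $2$; this is cosmetic, but the paper's minimal $3$-vertex base case is what makes the clean constant come out.
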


\begin{proof}
Notice that the lifting construction of Lemma \ref{lem:lifting} generalizes to
$k\ge3$ variables by adding $k-2$ extra copies of $H_0$ in $G_1$ and $H_1$
and $k-2$ extra copies of $G_i$ in $G_{i+1}$ and $H_{i+1}$. Similarly to Theorem \ref{thm:hierarchy2-color},
this immediately gives us colored trees $G_{i+1}$ and $H_{i+1}$ such that
$D^2_{\Sigma_i}(G_i,H_i)\le i$ and $D^k_{\Pi_i}(G_i,H_i)=\infty$ for all $i\ge1$; see Fig.~\ref{fig:GH12-still-uncolored}.

\begin{figure}
\centering
 \begin{tikzpicture}[scale=.8]

\newcommand{\hlevel}{1}

\newcommand{\hone}[2]{
\path (#1,#2) node[vertex] (r) {}
     +(-.75,\hlevel) node[avertex] (r1)  {} edge (r) 
       +(-.25,\hlevel) node[avertex] (r2)  {} edge (r) 
      +(.25,\hlevel) node[avertex] (r3)  {} edge (r)
      +(.75,\hlevel) node[avertex] (r4)  {} edge (r);
}
\newcommand{\gone}[2]{
\path (#1,#2) node[vertex] (r) {}
     +(-.75,\hlevel) node[rvertex] (r1)  {} edge (r) 
       +(-.25,\hlevel) node[avertex] (r2)  {} edge (r) 
      +(.25,\hlevel) node[avertex] (r3)  {} edge (r)
      +(.75,\hlevel) node[avertex] (r4)  {} edge (r);
}
\newcommand{\gtwo}[2]{
\node[vertex] (rr) at (#1,#2) {};
\draw (rr) -- ++(-3,\hlevel) (rr) -- ++(-1,\hlevel) (rr) -- ++(1,\hlevel) (rr) -- ++(3,\hlevel);
\gone{#1-3}{#2+\hlevel}
\gone{#1-1}{#2+\hlevel}
\gone{#1+1}{#2+\hlevel}
\hone{#1+3}{#2+\hlevel}
}
\newcommand{\htwo}[2]{
\node[vertex] (rr) at (#1,#2) {};
\draw (rr) -- ++(-3,\hlevel) (rr) -- ++(-1,\hlevel) (rr) -- ++(1,\hlevel) (rr) -- ++(3,\hlevel);
\gone{#1-3}{#2+\hlevel}
\gone{#1-1}{#2+\hlevel}
\gone{#1+1}{#2+\hlevel}
\gone{#1+3}{#2+\hlevel}
}

\begin{scope}[xshift=-3 cm]
\node at (-1cm,0cm) {$G_1$};
  \gone{0}{0}
\end{scope}
\begin{scope}
\node at (-1cm,0cm) {$H_1$};
  \hone{0}{0}
\end{scope}

\begin{scope}[yshift=-3.5cm]
\node at (-1cm,0cm) {$G_2$};
  \gtwo{0}{0}
\end{scope}

\begin{scope}[yshift=-6.5cm]
\node at (-1cm,0cm) {$H_2$};
  \htwo{0}{0}
\end{scope}

 \end{tikzpicture}
\caption{Proof of Theorem \protect\ref{thm:hierarchy}.
The trees for 3-variable logic are still colored.}
\label{fig:GH12-still-uncolored}
\end{figure}

In order to remove colors from $G_i$ and $H_i$, we construct these graphs recursively in the same way but
now, instead of red and blue one-vertex graphs, we start with
$$G_0=
\begin{tikzpicture}[every node/.style=uvertex,scale=.5]
\path (-.5,1) node (a) {}
      (0,0) node (b)  {} edge (a) 
       +(.5,1) node (c)  {} edge (b);
\end{tikzpicture}
\text{\quad and\quad}
H_0=
\begin{tikzpicture}[every node/.style=uvertex,scale=.5]
\path (0,1.5) node (a) {}
      (0,.75) node (b)  {} edge (a) 
       (0,0) node (c)  {} edge (b);
\end{tikzpicture},
$$
see Fig.~\ref{fig:GH1uncolored}.
Note that in the course of construction $G_0$ and $H_0$ will be handled as
rooted trees (otherwise they are isomorphic).

\begin{figure}
\centering
 \begin{tikzpicture}[every node/.style=uvertex,scale=.7]
\newcommand{\hlevel}{1}
\begin{scope}[xscale=-1]
\node[draw=none,fill=none] at (1cm,0cm) {$G_1$};
\path (0,0) node (r) {}
      +(-1.5,\hlevel) node (a)  {} edge (r) 
       +(-.5,\hlevel) node (b)  {} edge (r) 
      +(.5,\hlevel) node (c)  {} edge (r)
      +(1.5,\hlevel) node (d)  {} edge (r)
     +(-1.5,2*\hlevel) node (a1)  {} edge (a) 
     +(-.5,2*\hlevel) node (b1)  {} edge (b) 
     +(.5,2*\hlevel) node (c1)  {} edge (c)
     +(-1.5,3*\hlevel) node ()  {} edge (a1) 
     +(-.5,3*\hlevel) node ()  {} edge (b1) 
     +(.5,3*\hlevel) node ()  {} edge (c1)
     +(1,2*\hlevel) node ()  {} edge (d)
     +(2,2*\hlevel) node ()  {} edge (d);
\end{scope}
\begin{scope}[xshift=5.5cm]
\node[draw=none,fill=none] at (-1cm,0cm) {$H_1$};
\path (0,0) node (r) {}
     +(-1.5,\hlevel) node (a)  {} edge (r) 
       +(-.5,\hlevel) node (b)  {} edge (r) 
      +(.5,\hlevel) node (c)  {} edge (r)
      +(1.5,\hlevel) node (d)  {} edge (r)
     +(-1.5,2*\hlevel) node (a1)  {} edge (a) 
     +(-.5,2*\hlevel) node (b1)  {} edge (b) 
     +(.5,2*\hlevel) node (c1)  {} edge (c)
     +(1.5,2*\hlevel) node (d1)  {} edge (d)
    +(-1.5,3*\hlevel) node ()  {} edge (a1) 
     +(-.5,3*\hlevel) node ()  {} edge (b1) 
     +(.5,3*\hlevel) node ()  {} edge (c1)
     +(1.5,3*\hlevel) node ()  {} edge (d1);
\end{scope}
 \end{tikzpicture}
\caption{Proof of Theorem \protect\ref{thm:hierarchy}.
The uncolored versions of $G_1$ and $H_1$ for 3-variable logic.}
\label{fig:GH1uncolored}
\end{figure}

We now claim that for the uncolored trees $G_i$ and $H_i$ it holds
\begin{enumerate}
\item 
$D^3_{\Sigma_i}(G_i,H_i)\le i+5$,
\item 
$D^k_{\Pi_i}(G_i,H_i)=\infty$.
\end{enumerate}
The latter claim is true exactly by the same reasons as in the colored case:
since the number of Spoiler's jumps is bounded, Duplicator is always able to
ensure playing on isomorphic branches.
To prove the former claim, we will show that Spoiler can win similarly to the colored case
playing with 3 pebbles.

Note that in the uncolored version of $G_i$ and $H_i$, all formerly gray vertices have degree $k+1$,
red vertices have degree 3, and blue vertices have degree 2.
A typical ending of the game on the colored trees was that Spoiler pebbles a red vertex
while Duplicator is forced to pebble a blue one. Now this corresponds to pebbling
a vertex $u$ of degree 3 by Spoiler and a vertex $v$ of degree 2 by Duplicator.
Having 4 pebbles, Spoiler would win by 
pebbling the three neighbors of $u$. Having only 3 pebbles, Spoiler
 first pebbles two neighbors $u_1$ and $u_2$ of $u$
(in fact, one neighbor is already pebbled immediately before $u$). 
Duplicator must respond with the two
neighbors $v_1$ and $v_2$ of $v$. In the next round Spoiler moves the pebble from
$u$ to its third neighbor $u_3$. Duplicator must remove the pebble from $v$
and place it on some vertex $v_3$ non-adjacent to both $v_1$ and $v_2$.
Note that, while the distance between any two vertices of $u_1$, $u_2$, and $u_3$
equals 2, there is a pair of indices $s$ and $t$ such that $v_s$ and $v_t$
are at the distance more than 2. Spoiler now wins by moving the pebble
from $u_q$ to $u$, where $\{q\}=\{1,2,3\}\setminus\{s,t\}$.

It remains to note that with 3 pebbles Spoiler is able to force climbing upwards
in the trees and, hence, he can follow essentially the same winning strategy as in the colored
case. Duplicator can deviate from this scenario only in the first round.
Recall that in this round Spoiler pebbles a vertex $u$ at the distance 1 from the root level,
having degree at least 3. Suppose that
Duplicator responds with pebbling a vertex $v$ at the distance more than 1 from the
root level. If $i=1$, then $v$ is of degree at most 2, and Spoiler wins as explained above.
If $i\ge2$, then $v$ can have degree 3 or $k+1$. In this case
Spoiler forces climbing up and wins by pebbling a leaf above a formerly blue 
vertex because by this point Duplicator has already reached the highest possible level.
Suppose now that in the first round Duplicator
pebbles the root vertex. Then Spoiler puts a second pebble on the root
of his graph, Duplicator is forced to pebble a vertex one level higher,
and Spoiler again wins by forcing climbing up from the root to the highest
leaf level.

Thus, we have shown that $A^k(n_i)\ge i$ for $n_i=v(G_i)$.
Since $n_{i}=(k+1)n_{i-1}+1$ and $n_0=3$, we have
$n_{i}=3(k+1)^i+\frac{(k+1)^i-1}k$, which implies that $A^2(n_i)>\log_{k+1}n_i-1$. 
Like to the proof of Theorem \ref{thm:hierarchy2-color}, this bound extends to
all $n$ at the cost of decreasing it by~$1$.
\end{proof}

\begin{remark}\rm
 The proof of Theorem \ref{thm:hierarchy} implies that a limited number of quantifier alternations
cannot be compensated by an increased number of variables:
for every $k\ge3$ and $i\ge1$ there is a class of uncolored graphs definable in $\fo3$
but not in $\Sigma_i\cap\fo k$. If we allow vertex colors, there is such a class
definable even in $\fo2$.
\end{remark}

 Theorems \ref{thm:hierarchy2-color} and \ref{thm:hierarchy} are optimal in the sense that
they cannot be extended to $\fo2$ over uncolored trees.
The reason is that
 the quantifier alternation hierarchy of $\fo2$ over uncolored graphs
collapses to the second level; see Section~\ref{sec:uncol-collapse}.

We now show that the bound of Theorem \ref{thm:hierarchy} is tight up
to a constant factor. The following theorem implies that, if $k\ge3$,
then $A^k(n)<(k+3)\log_2n$ over colored trees. The proof easily extends
to the class of all binary structures whose Gaifman graph is a tree.

\begin{theorem}\label{thm:log-upper-trees}
Let $k\ge3$. If $D^k(T,T')<\infty$ for colored trees $T$ and $T'$, then
  \begin{equation}
    \label{eq:TTlog}
D^k(T,T')<(k+3)\log_2n   
  \end{equation}
 where $n$ denotes the number of vertices in~$T$.
\end{theorem}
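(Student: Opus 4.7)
The plan is to prove this by strong induction on $n$ using the centroid decomposition of trees. Recall that every tree $T$ has a centroid: a vertex $c$ whose removal leaves components each of size at most $v(T)/2$. Spoiler will use centroids to reduce the game to one on strictly smaller trees, at a cost of at most $k+3$ rounds per halving, giving $(k+3)\log_2 n$ rounds overall.

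First I would argue that Spoiler can choose a centroid $c$ of (say) $T$ with the extra property that for every $v\in V(T')$ the rooted pair $(T,c)$ versus $(T',v)$ remains $\fo k$-distinguishable. Such a $c$ must exist: otherwise every rooted version $(T,c)$ at a centroid $c$ of $T$ is $\fo k$-indistinguishable from some $(T',v)$ and vice versa, and a multiset-matching argument using $\fo k$'s bounded ability to count rooted-subtree types would then yield $D^k(T,T')=\infty$, contradicting our hypothesis. Spoiler pebbles this $c$; Duplicator responds with some $c'\in V(T')$; by construction the resulting rooted pair $(T,c)$ vs.\ $(T',c')$ is $\fo k$-distinguishable with $c,c'$ pre-pebbled.

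Since $(T,c)$ and $(T',c')$ are distinguishable as rooted trees but $c,c'$ have the same atomic type (otherwise Spoiler wins immediately), some $\fo k$-type $\tau$ of immediate child subtrees is over-represented at $c$ compared to $c'$. Spoiler pebbles a child $u$ of $c$ whose subtree has type $\tau$ and then spends at most $k-1$ further witness rounds on children of $c$ and $c'$ to force Duplicator into a response $u'$ that is a child of $c'$ whose subtree has type different from $\tau$. Only $k-1$ witness pebbles are needed because $\fo k$ distinguishes multiplicities only up to a threshold of order $k-1$. Once the committed pair $(T_u,u)$ vs.\ $(T'_{u'},u')$ is $\fo k$-distinguishable, the centroid property gives $v(T_u)\le n/2$. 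Spoiler now releases his auxiliary pebbles, keeping only $(u,u')$, and appeals to induction. Pebble release is safe because in a tree any subsequent deviation by Duplicator outside the subtree rooted at $u'$ is detectable via adjacency using the remaining pebbles (here $k\ge 3$ enters crucially to permit a pebble for movement while anchoring at $u'$). With at most $k+3$ rounds spent on this halving and the inductive bound on the sub-game, we obtain $D^k(T,T')<(k+3)\log_2 n$.

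The main obstacle is the commitment step: showing that $O(k)$ rounds suffice to exhibit the multiset mismatch among child subtrees of $c$ and $c'$. This rests on a careful analysis of $\fo k$-equivalence of rooted trees, namely that type multiplicities are counted only up to a bound depending on $k$, so $k-1$ witnesses always expose the gap. A secondary technical point is verifying that the auxiliary pebbles may safely be released when recursing, which uses tree adjacency to localize Duplicator's responses in the intended subtree of $T'$.
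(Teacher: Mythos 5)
Your skeleton (pebble a separator/centroid of $T$, descend into a smaller non-equivalent branch, pay $O(k)$ rounds per halving) matches the paper's, but two of your load-bearing steps are asserted rather than proved, and one of them is wrong as stated. First, the commitment step. You want Spoiler to expose, in $k-1$ witness rounds, a mismatch in the multiset of $\fo k$-types of child subtrees, on the grounds that ``$\fo k$ distinguishes multiplicities only up to a threshold of order $k-1$.'' That thresholding statement is a composition theorem for $\fo k$ on rooted trees; it is not obvious, and even granting it, it only tells you \emph{that} the rooted structures differ, not that Spoiler can \emph{certify} the difference in $O(k)$ rounds: to catch Duplicator matching a witness branch to a branch of a different $\fo k$-type, Spoiler must then win the $\fo k$-game inside that branch, but his pebbles are tied up on the other witnesses. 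The paper sidesteps this entirely by first proving a truncation claim for \emph{isomorphic} branches (more than $k$ isomorphic branches at a vertex can be cut down to $k$ without changing $D^k$), applied top-down from the central vertex so that the truncated trees $T\bmod k$ and $T'\bmod k$ are non-isomorphic with branching index at most $k$. After that, all multiplicities are at most $k$, and the invariant carried through the recursion is \emph{non-isomorphism} of branches rather than $\fo k$-inequivalence; this is what makes the $(k+2)$-round branch-selection phase actually work (with a separate trick for the multiplicity-$(k-1)$-versus-$k$ case).

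Second, the recursion step. You release the auxiliary pebbles, keep only $(u,u')$, and claim Duplicator's escape from the subtree rooted at $u'$ is ``detectable via adjacency.'' It is not: the next vertex Spoiler pebbles is a centroid of $T_u$, which is in general far from $u$, so Duplicator may answer anywhere in $T'$ without violating any atomic relation. The paper confines Duplicator by keeping an adjacent pair pebbled and invoking a distance-halving claim (Claim~\ref{cl:distance}): if the pebbled vertices realize different distances in the two trees, Spoiler wins in at most $\lceil\log n\rceil$ further moves, which fits the budget since it ends the game. Your argument needs this (or an equivalent device), and it also needs a strengthened induction hypothesis for rooted branches hanging off a pebbled edge, together with the case analysis the paper performs (its Cases 1 and 2, with subcases) when the part of the selected branch near the new separator is isomorphic to its counterpart. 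As written, the proposal has the right architecture but the two technical lemmas that make it go through are missing.
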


\begin{proof}
Let $T-v$ denote the result of removal of a vertex $v$ from the tree $T$.
The component of $T-v$ containing a neighbor $u$ of $v$ will be
denoted by $T_{vu}$ and considered a rooted tree with the root at $u$.
A similar notation will apply also to $T'$.
The rooted trees $T_{vu}$ will be called \emph{branches of $T$ at the vertex $v$}.
Let $\tau(v)$ denote the maximum number of pairwise isomorphic branches at $v$.
We define the \emph{branching index} of $T$ by $\tau(T)=\max_v\tau(v)$.
In order to prove the theorem, we will show that
the bound \refeq{TTlog} is true for any non-isomorphic colored trees
with branching index at most $k$ and that
 $D^k(T,T')=D^k(T\bmod k,T'\bmod k)$ for $T\bmod k$ and $T'\bmod k$
being ``truncated'' versions of $T$ and $T'$ whose branching index
is bounded by $k$. We first handle the latter task.

The following fact easily follows from the trivial observation that $k$
pebbles can be placed on at most $k$ isomorphic branches.

\begin{claim}\label{cl:truncate}
  Let $T$ be a colored tree. Suppose that $T$ has more than $k$ isomorphic
branches at a vertex $v$. Remove all but $k$ of them from $T$ and denote
the resulting tree by $\hatT$. Then $D^k(T,G)=D^k(\hatT,G)$ for any colored graph~$G$.
\noproofclaim
\end{claim}

The truncated tree $T\bmod k$ is obtained from $T$ by a series of truncations
as in Claim \ref{cl:truncate}. The truncations steps should be done from the top
to the bottom in order to exclude appearance of new isomorphic branches
in the course of the procedure. In order to define the ``top and bottom'' formally, recall that
the {\em eccentricity\/} of a vertex $v$ in a graph $G$ is defined
by $e(v)=\max_{u}\dist(v,u)$, where $\dist(v,u)$ denotes the distance between the two vertices. The {\em diameter\/} and the
{\em radius\/} of $G$ are defined by $d(G)=\max_{v}e(v)$
and $r(G)=\min_{v}e(v)$ respectively.
A vertex $v$ is {\em central\/} if $e(v)=r(G)$.
For trees it is well known (e.g., \cite[Chapter 4.2]{Ore}) that
if $d(T)$ is even, then $T$ has a unique central
vertex $c$.
If $d(T)$ is odd, then $T$ has exactly two central vertices $c_1$ and
$c_2$, that are adjacent.
Let us regard the central vertices as lying on the bottom level
and the tree $T$ as growing upwards. The height of a vertex is then
its distance to the nearest central vertex. Starting from the highest
level and going downwards, for each vertex $v$ we cut off extra branches
at $v$ if their number exceeds $k$. Note that this operation can
increase the number of isomorphic branches from vertices in lower levels
but cannot do this for vertices in higher levels. Therefore,
the resulting tree $T\bmod k$ has branching index at most~$k$.

Applying repeatedly Claim \ref{cl:truncate}, we arrive at the equality
$D^k(T,T')=D^k(T\bmod k,T'\bmod k)$. Note that $T\bmod k\not\cong T'\bmod k$
because it is assumed that $D^k(T,T')<\infty$. Thus, we have reduced proving
the bound \refeq{TTlog} to the case that $T$ and $T'$ are non-isomorphic
and both have branching index at most $k$. Therefore, below we make this assumption.

We have to show that Spoiler is able to win the $k$-pebble game on such
$T$ and $T'$ in less than $(k+3)\log_2n$ moves.
Below we will actively exploit the following fact ensured by a standard halving strategy
for Spoiler.

\begin{claim}\label{cl:distance}
Suppose that in the 3-pebble Ehrenfeucht-Fra\"iss\'e game on graphs $G$ and $H$ some two
vertices $x,y\in V(G)$ at distance $n$ are pebbled so that their
counterparts $x',y'\in V(H)$ are at a strictly larger distance.
Then Spoiler can win in at most $\lceil\log n\rceil$ extra moves.
\noproofclaim
\end{claim}

Every tree $T$ has a single-vertex \emph{separator}, that is, a vertex $v$ such that
no branch of $T$ at $v$ has more than $n/2$ vertices;
see, e.g., \cite[Chapter 4.2]{Ore}.
The idea of Spoiler's strategy is to pebble such a vertex and
to force further play on some non-isomorphic branches of $T$ and $T'$,
where the same strategy can be applied recursively.
This scenario was realized in \cite[Theorem 5.2]{PikhurkoV11}
for first-order logic with counting quantifiers.
Without counting, we have to use some additional tricks
that are based on boundedness of the branching index.
Below, by $N(v)$ we will denote the neighborhood of a vertex~$v$.

Thus, in the first round Spoiler pebbles a separator $v$ in $T$
and Duplicator responds with a vertex $v'$ somewhere in $T'$.
Since $T\not\cong T'$, there is an isomorphism type $\calB$
of a branch of $T$ at $v$ that appears with different multiplicity
among the branches of $T'$ at $v'$. Spoiler can use this fact
to force pebbling vertices $u\in N(v)$ and $u'\in N(v')$
so that the rooted trees $T_{vu}$ and $T'_{v'u'}$ are non-isomorphic
(the pebbles on $v$ and $v'$ can be reused but, finally, $v$ and $v'$
have to remain pebbled as well).
This is easy to do if the multiplicity of $\calB$ in one of the trees
is at most $k-2$. If this multiplicity is $k-1$ in one tree
and $k$ in the other, then Spoiler can do it still with $k$ pebbles
like as in the proof of Theorem \ref{thm:hierarchy}.
This phase of the game can take $k+2$ rounds.

The next goal of Spoiler is to force pebbling adjacent vertices $v_1$ and $u_1$
in $T_{vu}$ and adjacent vertices $v'_1$ and $u'_1$ in $T'_{v'u'}$ so that
$T_{v_1u_1}\not\cong T'_{v'_1u'_1}$ and 
\begin{equation}
  \label{eq:halving}
 v(T_{v_1u_1})\le v(T_{vu})/2\text{ or }
v(T'_{v'_1u'_1})\le v(T_{vu})/2.
\end{equation}
Once this is done, the same will be repeated recursively
(with the roles of $T$ and $T'$ swapped if only the second inequality in \refeq{halving} is true).

To make the transition from $T_{vu}$ to $T_{v_1u_1}$, Spoiler first pebbles a separator $w$ of $T_{vu}$. 
Note that Duplicator
is forced to respond with a vertex $w'$ in $T'_{v'u'}$.
Otherwise we would have $\dist(w,u)=\dist(w,v)-1$ while $\dist(w',u')=\dist(w',v')+1$.
Therefore, some distances among the three pebbled vertices
would be different in $T$ and in $T'$ and Spoiler could win
in less than $\log v(T_{vu})+1$ moves by Claim~\ref{cl:distance}.

Let $T_{w\setminus u}$ denote the rooted tree obtained by removing from $T$ the branch
at $w$ containing $u$ and rooting the resulting tree at $w$. Note that $V(T_{w\setminus u})\subset V(T_{vu})$.
We consider a few cases.

\Case{1}{$T_{w\setminus u}\not\cong T'_{w'\setminus u'}$}
In the trees $T_{w\setminus u}$ and $T'_{w'\setminus u'}$ we will consider branches at their roots $w$ and~$w'$.

\Subcase{1-a}{$T_{w\setminus u}$ contains a branch of isomorphism type $\calB$ that 
has different multiplicity in $T'_{w'\setminus u'}$}
As above, Spoiler can use $k$ pebbles and $k+1$ moves to force pebbling vertices
$x\in N(w)$ and $x'\in N(w')$ such that $T_{wx}\not\cong T'_{w'x'}$ and
\begin{equation}
  \label{eq:calB}
 T_{wx}\in\calB\text{ or }T'_{w'x'}\in\calB.
\end{equation}
The pebbles occupying $v,v'$ and $u,u'$ can be released.
The pebbles on $w$ and $w'$ can also be reused but, finally, $w$ and $w'$
have to remain pebbled.
 The branches $T_{wx}$ and $T'_{w'x'}$ will now
serve as $T_{v_1u_1}$ and $T'_{v'_1u'_1}$. 
Condition \refeq{halving} follows from \refeq{calB} because $w$ is a separator of~$T_{vu}$.

\Subcase{1-b}{$T_{w\setminus u}$ does not contain any branch as in Subcase 1-a}
In this subcase there is a vertex $x'\in N(w')$ such that $T'_{w'x'}$
is a branch of $T'_{w'\setminus u'}$ and the isomorphism type of $T'_{w'x'}$
does not appear in $T_{w\setminus u}$. Spoiler moves the pebble from $v'$ to $x'$.
Suppose that Duplicator responds with $x\in N(w)$.
If $x$ lies on the path between $u$ and $w$ (while $x'$ does not lie on the path between $u'$ and $w'$),
then equality of distances among the pebbled vertices cannot be preserved, and Spoiler
wins by Claim~\ref{cl:distance}. If $x$ does not lie between $u$ and $w$, then
$T_{wx}$ is a branch of $T_{vu}$ at the vertex $w$. The first equality in Condition \refeq{halving}
is then true because $w$ is a separator of $T_{vu}$. In this case, 
$T_{wx}$ and $T'_{w'x'}$ can serve as $T_{v_1u_1}$ and $T'_{v'_1u'_1}$.

\Case{2}{$T_{w\setminus u}\cong T'_{w'\setminus u'}$}
We assume that $\dist(u,w)=\dist(u',w')$ because otherwise Spoiler wins by Claim \ref{cl:distance}.
For a vertex $y$ on the path between $u$ and $w$, let $T_{y\setminus u,w}$ denote 
the rooted tree obtained by removing from $T$ the branches
at $y$ containing $u$ and $w$ and rooting the resulting tree at $y$.
The rooted tree $T_{u\setminus v,w}$ is defined similarly.
Note that $T_{u\setminus v,w}$ and each $T_{y\setminus u,w}$
are parts of a branch of $T_{vu}$ at the vertex $w$ and, therefore, have
at most $v(T_{vu})/2$ vertices. Given $y$ between $u$ and $w$,
by $y'$ we will denote the vertex lying between $u'$ and $w'$
at the same distance to these vertices as $y$ to $u$ and $w$.
Since $T_{vu}\not\cong T'_{v'u'}$, we must have
\begin{equation}
  \label{eq:nonisoy}
 T_{y\setminus u,w}\not\cong T'_{y'\setminus u',w'} 
\end{equation}
for some $y$ or
\begin{equation}
  \label{eq:nonisou}
 T_{u\setminus v,w}\not\cong T'_{u'\setminus v',w'}.
\end{equation}
Assume that Condition \refeq{nonisoy} is true for some $y$ and fix this vertex.

\Subcase{2-a}{$T_{y\setminus u,w}$ contains a branch of isomorphism type $\calB$ that 
has different multiplicity in $T'_{y'\setminus u',w'}$}
Spoiler moves the pebble from $v$ to $y$. Duplicator is forced to move
the pebble from $v'$ to $y'$. The pebbles occupying $u,u'$ and $w,w'$ can now be released.
Spoiler proceeds similarly to Subcase 1-a and forces pebbling 
vertices $z\in N(y)$ and $z'\in N(y')$ such that $T_{yz}\not\cong T'_{y'z'}$ and
one of these trees has isomorphism type $\calB$ and, hence, is as small as desired.

\Subcase{2-b}{$T_{y\setminus u,w}$ does not contain any branch as in Subcase 2-a}
In this subcase there is a vertex $z'\in N(y')$ such that $T'_{y'z'}$
is a branch of $T'_{y'\setminus u',w'}$ whose isomorphism type
does not appear in $T_{y\setminus u,w}$. Similarly to Subcase 1-b,
Spoiler aims to pebble $y'$ and $z'$ while forcing Duplicator to respond with 
$y$ and $z\in N(y)$ such that $T_{yz}$ is a part of $T_{y\setminus u,w}$.
This will ensure that $T_{yz}\not\cong T'_{y'z'}$ and that $T_{yz}$
is small enough. Now Spoiler's task is more complicated because
he has to prevent Duplicator from pebbling $z$ on the path between $u$ and $w$.
Since this requires keeping the pebbles on $u,u'$ and $w,w'$, Spoiler cannot pebble
both $y'$ and $z'$ if there are only $k=3$ pebbles. In this case he first pebbles
the vertex $z'$ by the pebble released from $v$. Let $z$ be Duplicator's response. 
If $z$ is in $N(y)$ and does not lie between $u$ and $w$, Spoiler succeeds by
moving the pebble from $u'$ to $y'$. Duplicator is forced to 
move the pebble from $u$ to $y$ because $w'$ remains pebbled and, therefore,
the position of $y$ is determined by the distances to $z$ and $w$.
If $z$ is not in $N(y)$ or lies between $u$ and $w$, then Spoiler wins
because $\dist(z,u)\ne\dist(z',u')$ or $\dist(z,w)\ne\dist(z',w')$

An analysis of the case \refeq{nonisou}
is quite similar. The role of the triple $(u,y,w)$ is now played
by the triple $(v,u,w)$.

Note that the transition from $T_{vu}$ to $T_{v_1u_1}$ takes
at most $k+3$ rounds.
Also, 2 rounds suffice to win the game
once the current subtree $T_{vu}$ has at most 2 vertices. 
The number of transitions from the initial branch of order at most
$n/2$ to one with at most $2$ vertices is bounded by 
$\log_2n-1$ because $v(T_{vu})$ becomes twice smaller each time. 
It follows that Spoiler wins the game on $T$ and $T'$ in less than
$k+2+(\log_2n-1)(k+3)+2\le(k+3)\log_2n+1$ moves.
The additive term of 1 can be dropped because if pebbling the initial branch
takes no less than $k+2$ moves, then the size of this branch will actually
not exceed~$n/k$.
\end{proof}

\section{Alternation function for $\fo 2$ over colored graphs}

Theorem \ref{thm:hierarchy2-color} gives us a logarithmic
 lower bound on the alternation function $A^2(n)$, which is true even for trees. 
Over all colored graphs, we now prove a linear lower bound.
Along with the general upper bound $A^2(n)\le n+1$,
it shows that $A^2(n)$ has a linear growth.

\begin{theorem}\label{thm:altcount}
$A^2(n) > n/8-2$.
\end{theorem}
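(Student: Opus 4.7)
The plan is to construct, for every positive integer $m$, a pair of colored graphs $G_m,H_m$ with $v(G_m)=v(H_m)\le 8m+O(1)$ such that $G_m$ and $H_m$ are $\fo 2$-distinguishable while $D^2_{\Pi_m}(G_m,H_m)=\infty$. Such a pair certifies $A^2(G_m,H_m)>m$. Interpolating to any intermediate size by adjoining isolated vertices of a fresh color (which affects neither the $\Sigma$-game nor the $\Pi$-game) then yields $A^2(n)>n/8-2$ for every~$n$.

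The construction must be substantially leaner than Lemma~\ref{lem:lifting}, whose threefold branching gives $v(G_i)=\Theta(3^i)$ and hence only a logarithmic bound. The key idea is to replace the three-copy lifting layer by a constant-size \emph{alternation gadget} that uses a private pair of fresh vertex colors at every level: Duplicator's freedom to permute structurally identical copies is now superseded by the color constraint, so a gadget of constantly many vertices can play the role of an entire level of the classical lifting. Concretely, the construction chains $m$ such gadgets on top of a base pair $G_0,H_0$ with $D^2_{\E}(G_0,H_0)=1$; each gadget contributes exactly $8$ vertices in total to $v(G_m)+v(H_m)$ and the asymmetry lives at the topmost gadget, analogous to the role of $G_0,H_0$ inside $G_1,H_1$ in Lemma~\ref{lem:lifting}.

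The verification has two parts, paralleling parts~1 and~3 of that lemma. The easy half is $D^2_{\Sigma_m}(G_m,H_m)<\infty$: Spoiler descends the chain from top to bottom, consuming one jump per gadget to cross its colored barrier, and wins at the base pair in one further move. The hard half is $D^2_{\Pi_m}(G_m,H_m)=\infty$, proved by induction on~$m$. Duplicator uses the fresh colors to localize Spoiler's pebble to a single gadget at a time, and each of Spoiler's jumps (of which there are at most $m-1$ in a $\Pi_m$ game starting in $H_m$) lets her ``advance'' by exactly one gadget level. After $m-1$ jumps at least one level of the chain remains untouched; on that level the color labeling pins the pebbling to respect a partial isomorphism, and the two-pebble bookkeeping of $\fo 2$ lets Duplicator maintain this partial isomorphism for arbitrarily many rounds.

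The principal obstacle is engineering the alternation gadget correctly. It must be small (constantly many vertices using a handful of fresh colors), yet simultaneously satisfy two opposite constraints: (i)~no single layer of the chain can be crossed by the two pebbles of $\fo 2$ without Spoiler consuming one jump, which is what powers the inductive step of the $\Pi_m$ lower bound; and (ii)~one jump of Spoiler \emph{is} sufficient to cross one gadget, which is what powers the matching $\Sigma_m$ upper bound. Both requirements reduce to a careful case analysis of the two-pebble game on a single gadget, and that local analysis—together with the bookkeeping that pins the additive constant at $-2$ and the multiplicative constant at $1/8$—is the combinatorial core of the argument.
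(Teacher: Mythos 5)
Your proposal correctly identifies the shape of what is needed (a linear-size pair with roughly eight vertices per alternation level, a Spoiler strategy using one jump per level, and a Duplicator strategy surviving with one fewer jump), but it never produces the object on which everything rests: the ``alternation gadget.'' You say yourself that engineering this gadget and verifying the two opposite requirements (i) and (ii) is ``the combinatorial core of the argument'' --- and that core is exactly what is missing. As it stands this is a plan, not a proof. Worse, the mechanism you propose points in a direction that is unlikely to work. Giving each level a private pair of fresh colors ($\Theta(n)$ colors in total) \emph{constrains} Duplicator's responses, whereas the $\Pi_m$ lower bound needs Duplicator to have \emph{freedom}: in Lemma~\ref{lem:lifting} that freedom comes from duplicate isomorphic branches at every level, which is precisely what costs the factor of $3$ per level and caps that construction at a logarithmic bound. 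Your claim that ``Duplicator's freedom to permute structurally identical copies is now superseded by the color constraint'' replaces the resource Duplicator needs with a restriction that helps Spoiler, and you give no substitute mechanism for her survival. There is also an off-by-one in the certification: $D^2_{\Pi_m}(G_m,H_m)=\infty$ together with $\fo2$-distinguishability yields $A^2(G_m,H_m)\ge m$, not $>m$; indeed $A^2(G_m,H_m)>m$ is incompatible with your own claim $D^2_{\Sigma_m}(G_m,H_m)<\infty$, since a $\Sigma_m$ formula has alternation depth at most $m$.

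For comparison, the paper's construction uses only a constant number of colors and obtains Duplicator's freedom from a global, not level-local, device: the graphs are two disjoint copies $G=2G_m$ and $H=2H_m$ of a ``ladder'' built from blocks $A_m$ and $B_m$ glued in two inequivalent ways, and at every rung Duplicator has a left/right choice that lets her re-enter one of the partial isomorphisms $\alpha^{\pm}_s$ unless Spoiler jumps to the other graph. The doubling is essential: on single copies $G_m$, $H_m$ Spoiler wins with \emph{no} alternations by pebbling non-adjacent vertices, so any linear-size construction must neutralize this kind of attack, and your proposal does not address it. If you want to salvage your approach, you need to (a) exhibit a concrete gadget, (b) explain where Duplicator's room to maneuver comes from once the copies-to-mirror of the lifting construction are gone, and (c) handle Spoiler's non-adjacency moves that cross several levels at once.
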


\begin{proof}
For each integer $m\ge2$, we will construct colored graphs $G$ and $H$, both with $n=8m-4$ vertices,
that can be distinguished in $\fo2$ with $m-2$, but no less than that, alternations.
The graph $G=2G_m$ is the union of two disjoint copies of the same graph $G_m$
and, similarly, $H=2H_m$ where $G_m$ and $H_m$ are defined as follows.
Each of $G_m$ and $H_m$ is obtained by merging
two building blocks $A_m$ and $B_m$ shown in Fig.~\ref{fig:altcount}.
The colored graph $A_m$ is a ``ladder'' with $m$ horizontal rungs, each having
2 vertices. The vertices on the bottom rung are colored in green,
the vertices on the top rung are colored one in red and the other
in blue, the remaining $2m-4$ vertices are white (uncolored). The graph $B_m$
is obtained from $A_m$ by recoloring red in apricot and blue in cyan.
$A_m$ and $B_m$ are glued together at the green vertices.
There are two ways to do this, and the resulting graphs $G_m$ and $H_m$
are non-isomorphic. Let $\alpha^+$ (resp.\ $\alpha^-$)
denote the partial isomorphism from $G_m$ to $H_m$ identifying the $A_m$-parts
(resp.\ the $B_m$-parts) of these graphs.

We will design a strategy allowing Spoiler to win the $(m-2)$-alternation
(i.e., $\Sigma_{m-1}$ or $\Pi_{m-1}$)
2-pebble Ehrenfeucht-Fra\"iss\'e game on $G$ and $H$ and a strategy
allowing Duplicator to win the $(m-3)$-alternation game.
Before playing on $G$ and $H$, we analyse the 2-pebble game on $G_m$ and $H_m$.
Spoiler can win this game as follows. In the first round
he pebbles the left green vertex in $G_m$; see Fig.~\ref{fig:altcount}. 
Not to lose immediately, Duplicator responds either with the left or with the right
green vertex in $H_m$. The corresponding
partial isomorphism can be extended to $\alpha^+$ in the former case and to $\alpha^-$
in the latter case (but not to both $\alpha^+$ and $\alpha^-$). These two cases are similar, and we
consider the latter of them, where there is no extension to $\alpha^+$ and hence 
Spoiler has a chance to win playing in the $A_m$-parts of $G_m$ and~$H_m$.

\newcommand{\drawN}{
     \node[\lv] (a\i) at (0*\r,\r*\i) {};
     \node[\rv] (b\i) at (1*\r,\r*\i) {};
}

\newcommand{\drawE}{ 
     \FPset{\j}{\i}
     \FPadd{\i}{\i}{1}
     \FPround{\i}{\i}{0}
     \drawN
     \draw  (a\i) -- (a\j);
     \draw  (b\i) -- (b\j);
}

\newcommand{\drawL}{ 
     \drawE
     \draw  (a\i) -- (b\j);
}

\newcommand{\drawR}{ 
     \drawE
     \draw  (b\i) -- (a\j);
}

\begin{figure}
\centering
 \begin{tikzpicture}
\draw[<->,shorten <=5pt, shorten >=5pt,very thin] (1,3) arc (180:360:.75);
\draw[<->,shorten <=5pt, shorten >=5pt,very thin] (0,3) arc (180:360:1.75);
\draw[<->,shorten <=9pt, shorten >=9pt,dashed] (.1,3) arc (180:360:1.25);
\draw[<->,shorten <=9pt, shorten >=9pt,dashed] (.9,3) arc (180:360:1.25);
 \newcommand{\lv}{uvertex}
 \newcommand{\rv}{uvertex}
\begin{scope}[yshift=30mm]
   \node at (-.6cm,0 cm) {$A_4$};
   \FPset{\r}{1}
   \FPset{\i}{0}
   \renewcommand{\lv}{yvertex}  
   \renewcommand{\rv}{yvertex}  
   \drawN
   \renewcommand{\lv}{uvertex}
   \renewcommand{\rv}{uvertex}
   \drawR
   \drawL
   \renewcommand{\lv}{rvertex}   
   \renewcommand{\rv}{bovertex} 
   \drawR
 \end{scope}
\begin{scope}[yshift=30mm,xshift=25mm]
   \node at (-.6cm,0 cm) {$B_4$};
   \FPset{\r}{1}
   \FPset{\i}{0}
   \renewcommand{\lv}{yvertex}  
   \renewcommand{\rv}{yvertex}  
   \drawN
   \renewcommand{\lv}{uvertex}
   \renewcommand{\rv}{uvertex}
   \drawR
   \drawL
   \renewcommand{\lv}{aoovertex}  
   \renewcommand{\rv}{covertex}  
   \drawR
 \end{scope}
 \begin{scope}[xshift=60mm]
  \node at (-.6cm,0 cm) {$G_4$};
   \FPset{\r}{1}
   \FPset{\i}{0}
   \renewcommand{\lv}{aoovertex} 
   \renewcommand{\rv}{covertex} 
   \drawN
   \renewcommand{\lv}{uvertex}
  \renewcommand{\rv}{uvertex}
   \drawL
   \drawR
   \renewcommand{\lv}{yvertex}  
   \renewcommand{\rv}{yvertex}  
   \drawL
   \renewcommand{\lv}{uvertex}
   \renewcommand{\rv}{uvertex}
   \drawR
   \drawL
   \renewcommand{\lv}{rvertex}  
   \renewcommand{\rv}{bovertex} 
   \drawR
 \end{scope}
 \begin{scope}[xshift=90mm]
  \node at (-.6cm,0 cm) {$H_4$};
   \FPset{\r}{1}
   \FPset{\i}{0}
   \renewcommand{\rv}{aoovertex} 
   \renewcommand{\lv}{covertex}  
   \drawN
   \renewcommand{\lv}{uvertex}
   \renewcommand{\rv}{uvertex}
   \drawR
   \drawL
   \renewcommand{\lv}{yvertex}  
   \renewcommand{\rv}{yvertex}  
   \drawR
   \renewcommand{\lv}{uvertex}
   \renewcommand{\rv}{uvertex}
   \drawR
   \drawL
   \renewcommand{\lv}{rvertex}  
   \renewcommand{\rv}{bovertex} 
   \drawR
 \end{scope}

 \end{tikzpicture}
\caption{Proof of Theorem \protect\ref{thm:altcount}: Construction of $G=2G_4$ and $H=2H_4$.}\label{fig:altcount}
\end{figure}

In the second round Spoiler pebbles the upright neighbor of the left green
vertex in $G_m$.
His goal in subsequent rounds is to force pebbling, one by one, edges along
the upright paths to the red vertex in $G_m$ and to the blue vertex in $H_m$.
If Duplicator makes a step down, Spoiler wins by reaching the top rung sooner
than Duplicator. If Duplicator moves all the time upward,
starting from the third round of the game she has a possibility to slant.
Spoiler prevents this by changing the graph. Note that in one of the graphs
there is only one way upstairs, and Spoiler always leaves this graph for Duplicator.
In this way Spoiler wins by making $m$ moves and alternating between
the graphs $m-2$ times.

The strategy we just described is inoptimal with respect to the alternation number.
In fact, Spoiler can win the game on $G_m$ and $H_m$ with no alternation at all
by pebbling in the first round the right green vertex in $G_m$. If Duplicator
responds with the left green vertex in $H_m$, Spoiler puts the second pebble
on the non-adjacent vertex in the next upper rung. Duplicator is forced to play
in a different rung of $H_m$ because otherwise she would violate the non-adjacency relation.
If in the first round Duplicator responds with the right green vertex, Spoiler plays
similarly, but in the lower rung of $G_m$. In any case, the second pebble is closer
to the red or to the apricot vertex in $G_m$ than in $H_m$, which makes Spoiler's win easy.

Nevertheless, the former, $(m-2)$-alternation strategy has an advantage:
Spoiler ensures that the two pebbled vertices are always adjacent.
By this reason, the same strategy can be used by Spoiler to win also the game
on $G=2G_m$ and $H=2H_m$. Once Duplicator steps aside to another copy of $G_m$ or $H_m$,
she immediately loses.

The partial isomorphism $\alpha^+$ from $G_m$ to $H_m$ determines two partial isomorphisms
$\alpha^+_0$ and $\alpha^+_1$ from $G=2G_m$ to $H=2H_m$ identifying the two $A_m$-parts
of $G$ with the two $A_m$-parts of $H$. Similarly, $\alpha^-$ gives rise to two
partial isomorphisms $\alpha^-_0$ and~$\alpha^-_1$.

We now show that the number of alternations $m-2$ is optimal for the game on $G$ and $H$.
Fix an integer $a$ such that Spoiler has a winning strategy in the $a$-alternation
2-pebble game on $G$ and $H$. For this game,  
let us fix an arbitrary winning strategy for Spoiler and
a strategy for Duplicator satisfying the following conditions.
\begin{itemize}
\item
Duplicator always respects vertex rungs (this is clearly possible
because every rung in $G$ and $H$ has four vertices and there are only two pebbles).
\item
Additionally, Duplicator respects adjacency (this is possible and complies with
the preceding rule because there are edges only between adjoining rungs
and every vertex sends at least one edge to each adjoining rung).
\item 
Duplicator respects also non-adjacency. Moreover, whenever Spoiler violates
adjacency of the vertices pebbled in one graph, Duplicator responds so that
the vertices pebbled in the other graph are not only non-adjacent but even
lie in different $G_m$- or $H_m$-components.
\item
If Spoiler pebbles a vertex above the green rung and the three preceding rules
still do not determine Duplicator's response uniquely, then she responds according
to $\alpha^+_0$ or $\alpha^+_1$; in a similar situation below the green rung, she plays according
to $\alpha^-_0$ or~$\alpha^-_1$.
\end{itemize}
Note that these rules uniquely determine Duplicator's moves on non-green vertices
provided one pebble is already on the board. In particular, the choice of $\alpha^+_0$ or $\alpha^+_1$
in the last rule depends on the component where this pebble is placed.

Let $u_i\in V(G)$ and $v_i\in V(H)$ denote the vertices pebbled in
the $i$-th round of the game. 
We now highlight a crucial property of Duplicator's strategy.
Suppose that $u_i,v_i$ and $u_{i+1},v_{i+1}$ are in the $A_m$-parts of $G$ and $H$
and that $u_{i+1}$ and $v_{i+1}$ are non-green. 
Then the following conditions are met.
\begin{itemize}
\item 
If $u_i$ and $u_{i+1}$ (as well as $v_i$ and $v_{i+1}$) are non-adjacent, then $\alpha^+_s(u_{i+1})=v_{i+1}$
for $s=0$ or $s=1$.
\item 
If $u_i$ and $u_{i+1}$ (as well as $v_i$ and $v_{i+1}$) are adjacent
and $\alpha^+_s(u_i)=v_i$ for $s=0$ or $s=1$, then $\alpha^+_s(u_{i+1})=v_{i+1}$
for the same~$s$.
\end{itemize}
The similar property holds if the pebbles are in the $B_m$ parts.

Suppose that Spoiler wins in the $r$-th round.
Note that Duplicator's strategy allows Spoiler to win only when
$u_r$ and $v_r$ are on the top or on the bottom rungs and have different
colors (in the absence of these colors,
the described strategy would be winning for Duplicator).
Since the two cases are similar, assume that Spoiler wins on the top.

Let $p$ be the smallest index such that all vertices in the sequence
$u_p,v_p,\ldots,u_r,v_r$ are above the green level.
By assumption, $\alpha^+_s(u_r)\ne v_r$ for both $s=0,1$.
The aforementioned property of Duplicator's strategy implies that, furthermore,
\begin{equation}\label{eq:ineq}
\alpha^+_0(u_i)\ne v_i\text{ and }\alpha^+_1(u_i)\ne v_i\text{ for all }i\ge p.
\end{equation}
Therefore, $u_{i+1}$ and $u_i$
as well as $v_{i+1}$ and $v_i$ are adjacent for all $i\ge p$
(for else Duplicator plays so that $\alpha^+_s(u_{i+1})=v_{i+1}$ for $s=0$ or $s=1$).
By the same reason, $p>1$ and $u_{p-1}$ and $u_p$ are also adjacent.
It follows that $u_{p-1}$ and $v_{p-1}$ are green and $\alpha^+_s(u_{p-1})\ne v_{p-1}$ for both $s=0,1$.

Another consequence of \refeq{ineq} is that both vertex sequences
$u_{p-1},u_p,\ldots,u_r$ and $v_{p-1},v_p,\ldots,v_r$ lie
on upright paths. This follows from the fact that either from $u_i$ 
or from $v_i$ there is only one edge emanating upstairs (also downstairs),
and it is upright.

It remains to notice that after each transition to the adjoining rung
(i.e., from $u_i,v_i$ to $u_{i+1},v_{i+1}$ for $i\ge p-1$)
Spoiler has to jump to the other graph because otherwise Duplicator
will choose the neighbor that ensures $\alpha^+_s(u_{i+2})=v_{i+2}$
for some value of $s=0,1$ (determined by the condition that $\alpha^+_s$ is a map from the component
of $G$ containing $u_{p-1}$ to the component of $H$ containing $v_{p-1}$).
This observation readily implies that the number of alternations $a$ cannot be smaller than~$m-2$.

We have shown that $A^2(n)\ge m-1$ if $n=8m-4$. Adding up to seven
isolated vertices to both $G$ and $H$, we get the same bound also
for $n=8m-3,\ldots,8m+3$. Therefore, $A^2(n)\ge(n-11)/8$ for all~$n$.
\end{proof}

\section{Lower bounds for $D^k_\logic(G,H)$ and succinctness results}\label{sec:lower}

\subsection{Existential two-variable logic}

In the next section we will see that, if binary structures $G$ and $H$ have $n$ elements each
and $G$ is distinguishable from $H$ in existential two-variable logic, then
$D^2_\E(G,H)\le n^2+1$. 
Here we show that this bound is tight up to a constant factor.
For the existential-positive fragment of $\fo2$, a quadratic lower bound can be
obtained from the benchmark instances for the arc consistency problem going back to
\cite{Dechter.1985,Samal.1987}; see \cite{BerkholzV13} where also an alternative approach
is suggested. We here elaborate on the construction presented in \cite{BerkholzV13}.
To implement this idea for existential two-variable logic, we need to undertake a more delicate analysis
as the existential-positive fragment is more restricted and simpler.

\begin{theorem}\label{thm:lower-E}
  There are infinitely many colored graphs $G$ and $H$, both on $n$ vertices,
such that $G$ is distinguishable from $H$ in existential two-variable logic and
$$\textstyle
D^2_\E(G,H)>\frac1{11}\,n^2.
$$
\end{theorem}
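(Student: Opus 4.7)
The plan is to adapt the construction from \cite{BerkholzV13} for the existential-positive fragment of $\fo2$, refining it so that the quadratic lower bound survives the transition to full existential two-variable logic, where Duplicator must preserve not only edges but also non-edges and color distinctions.

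First, I would define, for each parameter $m$, colored graphs $G_m$ and $H_m$ on a common vertex set of size $n=\Theta(m)$, built as a ``chain'' of $m$ small gadgets of $O(1)$ vertices each. The coloring is chosen so that every gadget has a bounded local isomorphism type, but a subtle global defect placed at the far end of the chain in $H_m$ makes $G_m\not\cong H_m$ detectable by an existential $\fo2$ sentence. The gadgets are arranged so that the natural propagation of information through the chain---morally an arc-consistency sweep---requires the 2-pebble Spoiler to traverse $\Omega(m)$ gadgets, each traversal costing $\Omega(m)$ pebble placements, giving an overall $\Omega(m^2)=\Omega(n^2)$ bound.

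Second, for distinguishability I would exhibit an explicit Spoiler strategy in the $\Sigma_1$ 2-pebble game on $G_m,H_m$. Spoiler never leaves $G_m$; he uses his two pebbles to walk along the chain, forcing Duplicator into a position where the global defect in $H_m$ surfaces as a violation of partial isomorphism. By the $\Sigma_1$ version of Lemma \ref{lem:game} this yields $D^2_\E(G_m,H_m)<\infty$ and confirms existence of a distinguishing existential $\fo2$ formula.

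Third, for the lower bound I would describe a Duplicator strategy surviving more than $n^2/11$ rounds. Duplicator maintains a \emph{potential} $\Phi(u_1,u_2,v_1,v_2)$, where $u_1,u_2$ are the pebbled vertices in $G_m$ and $v_1,v_2$ the corresponding pebbled vertices in $H_m$, measuring how much propagation work Spoiler still has to perform before he can exploit the defect. I would show that each Spoiler move decreases $\Phi$ by at most $1$, while initially $\Phi=\Omega(n^2)$. Because $\Sigma_1$ admits atomic negations, Duplicator must also respect non-edges and color mismatches; guaranteeing that she always has a response preserving the full atomic type of the partner pebble (unless the potential has already reached zero) is where the construction departs from \cite{BerkholzV13}, and it is achieved by introducing additional ``decoy'' colors and carefully tuning the out-degrees inside each gadget.

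The main obstacle is precisely this preservation of negated atoms in the lower-bound strategy. In the existential-positive case Duplicator can afford to answer with any generic vertex satisfying the positive atoms, but in the $\Sigma_1\cap\fo2$ game her response must match the entire $1$-type of Spoiler's move relative to the partner pebble. Showing that the gadget structure still admits enough such legal responses throughout a quadratic number of rounds, and that the potential bookkeeping is robust under these strictly tighter local constraints, is the technical heart of the argument and the step I would expect to occupy most of the proof.
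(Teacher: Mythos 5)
Your plan correctly identifies the source of the construction (the arc-consistency benchmarks via \cite{BerkholzV13}) and correctly flags that the delicate point is Duplicator's ability to respect negated atoms, but it is missing the specific combinatorial mechanism that actually produces the quadratic bound, and the mechanism you sketch in its place does not work. You describe a \emph{linear chain} of $m$ constant-size gadgets with a ``defect at the far end'' and assert that Spoiler must traverse the chain $\Omega(m)$ times at cost $\Omega(m)$ each. But with two pebbles Spoiler can walk a chain of $m$ constant-size gadgets end to end in $O(m)$ moves, reaching the defect directly; nothing in your description forces him to return. Likewise, your potential $\Phi$ is posited to start at $\Omega(n^2)$ and drop by at most $1$ per move, but you give no reason why such a potential exists --- and for a chain with an endpoint defect it does not.

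The paper's construction is not a chain but a pair of \emph{cycles of mismatched lengths}: $G_m$ carries a 3-colored cycle of length $3(2m-1)$ with a single pendant (dandelion) vertex at $a_0$, while $H_m$ carries a cycle of length $6m$ with pendants at every apricot vertex \emph{except} $a'_m$. The only winning position for Spoiler is the pair $(a_0,a'_m)$, and the potential is the cyclic offset $\Delta$ between the pebbled positions, which ranges only up to $m$ (not $n^2$). Two facts then yield the bound: (i) because the cycle lengths differ by $3$, each full continuous circuit of Spoiler around $G_m$ advances $\Delta$ by exactly one, at a cost of $3(2m-1)$ moves per unit of progress; and (ii) --- the crux, Claim~C in the paper --- whenever Spoiler pebbles a vertex not adjacent to the currently pebbled one, Duplicator can reset $\Delta$ to at most $1$ while preserving the full atomic type. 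Fact (ii) is what forces Spoiler to pay a full circuit for each unit increase of $\Delta$, giving $(m-1)\cdot 3(2m-1)=\Omega(n^2)$ rounds. Your proposal contains neither the length-mismatch idea nor the reset property, so as written it would at best yield a linear lower bound.
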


\begin{proof}
Our construction will depend on an integer parameter $m\ge2$.
We construct a pair of colored graphs $G_m$ and $H_m$ such that
$G_m$ is distinguishable from $H_m$ in the existential two-variable logic,
both $v(G_m)=O(m)$ and $v(H_m)=O(m)$, and $D^2_\E(G_m,H_m)=\Omega(m^2)$.
Though $v(G_m)<v(H_m)$, later we will be able to increase the number of vertices
in $G_m$ to $v(H_m)$.

The graphs have vertices of 4 colors, namely apricot, blue, cyan, and dandelion.
$G_m$ contains a cycle of length $3(2m-1)$ where apricot, blue, and cyan
alternate in this order; see Fig.~\ref{fig:main-example}. 
$H_m$ contains a similar cycle of length $3\cdot2\,m$.
Successive apricot, blue, and cyan vertices will be denoted by
$a_i$, $b_i$, and $c_i$ in $G_m$, where $0\le i<2m-1$, and by
$a'_i$, $b'_i$, and $c'_i$ in $H_m$, where $0\le i\le2m-1$.
Furthermore, the vertex $a_0$ is adjacent to a dandelion vertex $d_0$,
and every $a'_i$ except for $i=m$ is adjacent to a dandelion vertex $d'_i$.
This completes the description of the graphs.

\begin{figure}
\centering
 \begin{tikzpicture}[scale=.7]
  \FPset{\r}{2}
  \node at (-\r cm,\r cm + 1cm) {$G_3$};
\begin{scope}[xscale=-1]
  \foreach \i in {0,...,4}
   {
    \node[aovertex] (a\i) at (\i*72:\r cm) {};
    \FPsub{\j}{\i}{1}
    \FPround{\j}{\j}{0}
    \ifnumgreater{\i}{0} 
      {\draw (a\i) -- (c\j);}{}
     \node[bovertex] (b\i) at (\i*72+24:\r cm) {};
    \node[covertex] (c\i) at (\i*72+48:\r cm) {};
    \draw (a\i) -- (b\i) -- (c\i); 
   }
   \node[dovertex] (d) at (\r + 1,0) {};
   \draw (c4) -- (a0) -- (d);
   \node[right] (egal) at (a0) {$\,\,a_0$};
\end{scope}

  \FPset{\p}{2.2}
  \node[xshift=70mm] at (-\p cm - 1cm,\r cm + 1cm) {$H_3$};
\begin{scope}[xshift=70mm]
  \foreach \i in {0,...,5}
   {
    \node[aovertex] (a\i) at (\i*60:\p cm) {};
    \FPsub{\j}{\i}{1}
    \FPround{\j}{\j}{0}
    \ifnumgreater{\i}{0} 
      {\draw (a\i) -- (c\j);
       \node[dovertex] (d\i) at (\i*60:\p cm + 1cm) {};
       \draw (a\i) -- (d\i);
      }{}
     \node[bovertex] (b\i) at (\i*60+20:\p cm) {};
    \node[covertex] (c\i) at (\i*60+40:\p cm) {};
    \draw (a\i) -- (b\i) -- (c\i); 
   }
    \draw (c5) -- (a0);
   \node[right] (egal) at (a0) {$\,\,a'_3$};
\end{scope}
 \end{tikzpicture}
\caption{Proof of Theorem \protect\ref{thm:lower-E}. 
The only way for Spoiler to win is to force pebbling the pair of vertices
$(a_0,a'_3)$ in a round.
Duplicator can force Spoiler to make many passes around $G_m$ before this
goal is achieved.}
\label{fig:main-example}
\end{figure}

By Lemma \ref{lem:game}, we have to show that Spoiler is able
to win the 2-pebble $\Sigma_1$ game on $G_m$ and $H_m$ and that Duplicator is able to prevent losing
the game for $\Omega(m^2)$ rounds.

Note that, once the pair $(a_0,a'_m)$ is pebbled, Spoiler wins in the next move
by pebbling $d_0$. He is able to force pebbling $(a_0,a'_m)$ as follows.
In the first round he pebbles $a_0$. Suppose that Duplicator responds with
$a'_s$, where $0\le s<m$. In a series of subsequent moves,
Spoiler goes around the whole circle in $G_m$, visiting $c_{2m-2},b_{2m-2},a_{2m-2},c_{2m-2},\ldots$
and using the two pebbles alternately
(if $m<s<2m$, he does the same but in the other direction). 
As Spoiler comes back to $a_0$, Duplicator is forced to arrive at $a'_{s+1}$.
The next Spoiler's tour around the circle brings Duplicator to $a'_{s+2}$ and so forth.
Thus, the most successful moves for Duplicator in the first round is $a'_0$.
Then Spoiler needs to play $1+m\cdot3(2m-1)+1=6m^2-3m+2$ rounds in order to win.

Our next task is to design a strategy for Duplicator allowing her
to survive $\Omega(m^2)$ rounds, no matter how Spoiler plays.
We will show that Duplicator is able
to force Spoiler to pass around the cycle in $G_m$ many times.
A crucial observation is that $(a_0,a'_m)$ is the only pair whose
pebbling allows Spoiler to win in one extra move.

Let us regard the additive group $\integers_{2m}$ as a cycle graph
with $i$ and $j$ adjacent iff $i-j=\pm1$. Denote the distance
between vertices in this graph by $\Delta$.
The same letter will denote the following partial function
$\Delta\function{V(G_m)\times V(H_m)}{\integers}$.
For two vertices of the same color, say, for $a_i$ and $a'_j$,
we set $\Delta(a_i,a'_j)=\Delta(i,j)$.
Note that $\Delta(a_0,a'_m)=m$, which is the largest possible value.
Duplicator' strategy will be to keep the value of the $\Delta$-function
on the pebbled pair as small as possible.

Specifically, in the first round Duplicator responds to Spoiler's
move $x$ with pebbling a vertex $x'$ such that $\Delta(x,x')=0$
(that is, if $x=a_i,b_i,c_i,d_0$, then $x'=a'_i,b'_i,c'_i,d'_0$ respectively).
Suppose that a pair $(y,y')$ is pebbled in the preceding round
and Duplicator is still alive.
If Spoiler pebbles $x$ in the current round, Duplicator chooses
her response $x'$ by the following criteria.
Below, $\sim$ denotes the adjacency relation.
\begin{itemize}
\item 
$x'$ should have the same color as $x$ and, moreover,
$x'\sim y'$ iff $x\sim y$
(this is always possible unless $(y,y')=(a_0,a'_m)$ and $x=d_0$);
\item 
if there is still more than one choice, $x'$ should minimize
the parameter $\Delta(x,x')$.
\end{itemize}
We do not consider the cases when $x=y$ or when $x$ is pebbled by the pebble 
removed from $y$ because, in our analysis, we can assume that
Spoiler uses an \emph{optimal} strategy, allowing him to win
the 2-pebble $\Sigma_1$ game on $G_m$ and $H_m$ from the initial position
$(y,y')$ in the smallest possible number of rounds
(if he does not play optimally, Duplicator survives even longer).

\begin{claim}
  \label{cl:discont}
If $x\not\sim y$ and $x\ne y$, then $\Delta(x,x')\le1$.
\end{claim}

\begin{subproof}
Assume first that $x\ne d_0$  and $y\ne d_0$.
W.l.o.g., suppose that $y$ and $y'$ are apricot and, specifically, $y'=a'_j$
(the blue and the cyan cases are symmetric to the apricot case).
Not to lose immediately, Duplicator cannot pebble $x'$ in $\{c'_{j-1},a'_j,b'_j\}$,
where $j-1$ is supposed to be an element of $\integers_{2m}$. 
This can obstruct attaining $\Delta(x,x')=0$
(if $x\in\{c_{j-1},a_j,b_j\}$), but then there is a choice of $x'$ with $\Delta(x,x')=1$.

Assume now that $x=d_0$. Then $x'=d'_0$ if $y'\ne a'_0$ and $x'=d'_1$ otherwise.
In both cases $\Delta(x,x')\le1$.

Finally, let $y=d_0$ and $y'=d'_j$. Then the value $x'=a'_j$ is forbidden and, 
if this prevents $\Delta(x,x')=0$, then we have $\Delta(x,x')=1$.
\end{subproof}

Consider now the dynamical behaviour of $\Delta(x,x')$, assuming that
Duplicator uses the above strategy and Spoiler follows an optimal winning strategy.
We have $\Delta(x,x')=0$ at the beginning of the game and $\Delta(x,x')=m$ at the end
(that is, in the round immediately before Spoiler wins).
Consider the last round of the game where $\Delta(x,x')\le1$.
By Claim \ref{cl:discont}, starting from the next round Spoiler
always moves along an edge in $G_m$. Note that, from now on, visiting
$d_0$ earlier than in the very last round would be inoptimal. Therefore,
Spoiler walks along the circle. Another consequence of optimality is that
he moves always in the same direction. 

W.l.o.g., we can suppose that
Spoiler moves in the ascending order of indices.
Note that $\Delta(x,x')$ increases by 1 only under the transition from
$x=a_{2m-2}$ to $x=a_0$ (at this point, the index of $x$ makes a jump in $\integers_{2m}$,
while the index of $x'$ moves along $\integers_{2m}$ always continuously). 
In order to increase $\Delta(x,x')$ from $1$ to $m$,
the edge $a_{2m-2}a_0$ must be passed $m-1$ times. It follows that,
before Spoiler wins, the game lasts at least $2+(m-2)\cdot3(2m-1)=6m^2-15m+8$ rounds.

Note that $v(G_m)=6m-2$ and $v(H_m)=8m-1$.
In order to make the number of vertices in both graphs $n=8m-1$,
let $m$ be multiple of $3$ and add two new connected components to $G_m$,
namely the cycle of length $2m$ with alternating colors apricot, blue, and cyan
and one isolated vertex of any color. 
Spoiler can still win by playing in the old component.
Since playing in the new components does not help him,
the game on the modified $G_m$ and the same $H_m$ lasts at least
$6m^2-15m+8=\frac3{32}\,n^2-O(n)$ rounds.
\end{proof}

\hide{
\begin{figure}
\centering
 \begin{tikzpicture}
  \FPset{\r}{2}
\begin{scope}[rotate=9,xscale=-1]
  \foreach \i in {0,...,4}
   {
    \node[aavertex] (a\i) at (\i*72:\r cm) {};
    \node[aavertex] (aa\i) at (\i*72+18:\r cm) {};
    \draw (a\i) -- (aa\i);
    \FPsub{\j}{\i}{1}
    \FPround{\j}{\j}{0}
    \ifnumgreater{\i}{0} 
      {\draw (a\i) -- (bb\j);}{}
    \node[bvertex] (b\i) at (\i*72+36:\r cm) {};
    \node[bvertex] (bb\i) at (\i*72+54:\r cm) {};
    \draw (aa\i) -- (b\i) -- (bb\i); 
   }
   \draw (bb4) -- (a0);
   \node[aavertex] (d) at (9:\r + .7) {};
   \draw (a0) -- (d) -- (aa0);
\end{scope}

  \FPset{\p}{2.2}
\begin{scope}[xshift=70mm,rotate=-9]
  \foreach \i in {0,...,5}
   {
    \node[aavertex] (a\i) at (\i*60:\p cm) {};
    \node[aavertex] (aa\i) at (\i*60+15:\p cm) {};
    \draw (a\i) -- (aa\i);
    \FPsub{\j}{\i}{1}
    \FPround{\j}{\j}{0}
    \ifnumgreater{\i}{0} 
      {\draw (a\i) -- (bb\j);
       \node[aavertex] (d\i) at (\i*60+7.7:\p cm + .7cm) {};
       \draw (a\i) -- (d\i) -- (aa\i);
      }{}
    \node[bvertex] (b\i) at (\i*60+30:\p cm) {};
    \node[bvertex] (bb\i) at (\i*60+45:\p cm) {};
    \draw (aa\i) -- (b\i) -- (bb\i); 
   }
    \draw (bb5) -- (a0);
\end{scope}
 \end{tikzpicture}
\caption{A bicolored modification of the example in Fig.~\protect\ref{fig:main-example}.}
\label{fig:bicolor}
\end{figure}
}

\begin{remark}\rm
In order to facilitate the exposition, the construction of graphs $G_m$ and $H_m$
uses 4 colors. In fact, the same idea can be realized with 2 colors.
This is optimal because for uncolored graphs one can show that 
if $G$ is distinguishable from $H$ in
existential two-variable logic, then
$D^2_\E(G,H)\le2\,v(H)$.
\end{remark}

\subsection{Lifting it higher}

Since $D^2_{\Sigma_i}(G,H)=D^2_{\Pi_i}(H,G)$, 
the results of this section 
hold true
as well for $\Pi_i\cap\fo2$.

\begin{theorem}\label{thm:lower-bound}
Let $i\ge1$.
There are infinitely many colored graphs $G$ and $H$, both on $n$ vertices,
such that $G$ is distinguishable from $H$ in $\Sigma_i\cap\fo2$ and
$D^2_{\Sigma_i}(G;H)>\frac1{11\cdot9^i}\,n^2-\frac1{11\cdot3^i}n$.
\end{theorem}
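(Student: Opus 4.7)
The plan is to iterate the lifting construction of Section~\ref{ss:lifting} starting from the graph pair $(G_0, H_0)$ produced in the proof of Theorem~\ref{thm:lower-E}, where $G_0$ and $H_0$ are colored graphs of common size $n_0$ with $D^2_\E(G_0, H_0) > n_0^2/11$. Denote by $(G_i, H_i)$ the pair obtained after $i$ applications of the lifting gadget.

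Before invoking Lemma~\ref{lem:lifting}, I need to check that Spoiler's winning strategy in the 2-pebble $\Sigma_1$ game on $(G_0, H_0)$ can be chosen \emph{continuous}. Inspecting the strategy described in the proof of Theorem~\ref{thm:lower-E}: after pebbling $a_0$ in the first round, Spoiler walks around the long cycle by alternating the two pebbles between successive cycle vertices, so every subsequent move places a pebble on a neighbor of the other pebble. The final move sends a pebble to $d_0$, which is adjacent to the pebble still sitting on $a_0$. Continuity therefore holds throughout the game.

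With continuity established, parts~1 and~2 of Lemma~\ref{lem:lifting} together give
$$
\frac{n_0^2}{11} \;<\; D^2_\E(G_0, H_0) \;\le\; D^2_{\Sigma_i}(G_i, H_i) \;<\; D^2_\E(G_0, H_0) + i,
$$
so $G_i$ is distinguishable from $H_i$ in $\Sigma_i \cap \fo 2$, with quantifier depth strictly exceeding $n_0^2/11$.

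It remains to express $n_0$ in terms of $n := v(G_i) = v(H_i)$. Each lifting step glues three copies of the previous rooted graphs (all of common size since $v(G_{j-1}) = v(H_{j-1})$ inductively) to one new root, yielding the recurrence $n_j = 3 n_{j-1} + 1$. Unrolling gives $n = 3^i n_0 + (3^i - 1)/2$, hence $n_0 \ge n/3^i - 1/2$. Substituting into the lower bound and expanding,
$$
D^2_{\Sigma_i}(G_i, H_i) \;>\; \frac{n_0^2}{11} \;\ge\; \frac{1}{11}\left(\frac{n}{3^i} - \frac{1}{2}\right)^2 \;>\; \frac{n^2}{11 \cdot 9^i} - \frac{n}{11 \cdot 3^i},
$$
as required. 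Infinitely many such pairs arise by varying the parameter $m$ underlying Theorem~\ref{thm:lower-E}, and intermediate values of $n$ can be covered by adjoining a few isolated vertices, as in the proof of Theorem~\ref{thm:hierarchy2-color}. The only step that is not pure bookkeeping is the continuity verification; everything else is a direct application of the lifting lemma.
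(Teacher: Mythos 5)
Your proposal follows the paper's proof essentially verbatim: lift the pair from Theorem~\ref{thm:lower-E} via Lemma~\ref{lem:lifting}, use part~1 for distinguishability in $\Sigma_i\cap\fo2$ and part~2 for the lower bound, then translate $n_0$ into $n$ via the recurrence $n_j=3n_{j-1}+1$; your explicit continuity check and the final arithmetic are correct. The only (harmless) imprecision is the right-hand inequality $D^2_{\Sigma_i}(G_i,H_i)<D^2_\E(G_0,H_0)+i$: part~1 gives the bound $r+i$ where $r$ is the length of the \emph{continuous} strategy (here $6m^2-3m+2$, slightly larger than the lower bound on $D^2_\E$), but since you only use finiteness there, nothing is affected.
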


\begin{proof}
For infinitely many values of an integer parameter $n_0$,
Theorem \ref{thm:lower-E} provides us with colored graphs $G_0$ and $H_0$ on $n_0$ vertices each
such that Spoiler has a continuous winning strategy in the 2-pebble $\Sigma_1$ game
on $G_0$ and $H_0$, and $D^2_\E(G_0,H_0)>\frac1{11}\,n_0^2$. Let $G_i$ and $H_i$ be now the graphs
obtained from $G_0$ and $H_0$ by the lifting construction described in Section \ref{sec:prelim}.
Note that $v(G_{i})=3\,v(G_{i-1})+1$, where $G_0=G$. It follows that
$n=v(G_{i})=3^in_0+\frac{3^i-1}2$. The graph
$G_i$ is distinguishable from $H_i$ in $\Sigma_i\cap\fo2$ by part 1 of
Lemma \ref{lem:lifting}. By part 2 of this lemma, we have $D^2_{\Sigma_i}(G_i,H_i)>\frac1{11}\,n_0^2$,
which implies the bound stated in terms of~$n$.
\end{proof}

Note that a similar result can be shown for uncolored directed graphs.
However, Theorem \ref{thm:lower-bound} cannot be extended to uncolored undirected graphs
because in this case one can show that,
if $G$ is distinguishable from $H$ in $\fo2$,
then $D^2_{\Sigma_2}(G,H)\le\max\{v(G),v(H)\}$.

Using a similar sequence of graphs, we can also show that $\Sigma_{i}\cap\fo2$
is more succinct than $\Sigma_{i-1}\cap\fo2$.
Given $i$, let us construct $G_i$ and $H_i$ starting from the same $G_0$ as in the
proof of Theorem \ref{thm:lower-bound} and a slightly modified $H_0$.
Specifically, we make all dandelion vertices in $H_0$ adjacent; see Fig.~\protect\ref{fig:main-example}
for $G_0$ and $H_0$, where $H_0$ is still unmodified.
This makes part 4 of Lemma \ref{lem:lifting} applicable, which along with part 2 gives us
the following result.

\begin{theorem}\label{thm:succinct}
  For each $i\ge2$ there are infinitely many colored graphs $G$ and $H$, 
both on $n$ vertices, such that $D^2_{\Sigma_i}(G,H)=O(1)$ while $D^2_{\Sigma_{i-1}}(G,H)<\infty$
and $D^2_{\Pi_i}(G,H)=\Omega(n^2)$.
\end{theorem}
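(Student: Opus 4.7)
The plan is to apply Lemma~\ref{lem:lifting} to a base pair $(G_0,H_0)$ engineered so that both parts~2 and~4 of the lemma are active. I would take $G_0$ exactly as in the proof of Theorem~\ref{thm:lower-E} and obtain $H_0$ from its counterpart there by adding all edges between dandelion vertices, so the dandelions of $H_0$ form a clique. Setting $(G,H):=(G_{i-1},H_{i-1})$ for the resulting lifted pair gives $n=v(G)=3^{i-1}n_0+O(1)=\Theta(n_0)$ for fixed $i$, so any $\Omega(n_0^2)$ bound translates directly into the desired $\Omega(n^2)$ bound.

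First I would check that the modification does no harm to the use of Theorem~\ref{thm:lower-E}. Spoiler's winning $\Sigma_1$ strategy there---pebble $a_0$, chase Duplicator around the $G_0$-cycle until her pebble is forced onto $a'_m$, then finish with $d_0$---never leaves $G_0$, so the extra edges inside $H_0$ are invisible to it. Duplicator's $\Delta$-minimising strategy uses only the cyclic structure of $H_0$, with dandelions relevant only in the very last move; hence the added dandelion-dandelion edges are never exercised and the bound $D^2_\E(G_0,H_0)\ge n_0^2/11$, together with Spoiler's continuous $\Sigma_1$ win, survive intact.

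The main new ingredient, and the hard part, is to exhibit a continuous $\Sigma_2$ winning strategy for Spoiler of constant length. I would propose the following three-round strategy. In round~1 Spoiler pebbles $a_0$, to which Duplicator must answer by some apricot vertex $a'_j$. In round~2 Spoiler pebbles $d_0$, still in $G_0$: if $j=m$ then Duplicator has no dandelion adjacent to $a'_m$ and loses immediately; otherwise she is forced onto the unique $d'_j$ adjacent to $a'_j$. In round~3 Spoiler uses his single permitted alternation and, moving the pebble currently on $(a_0,a'_j)$, jumps onto a dandelion $d'_k$ in $H_0$ with $k\notin\{j,m\}$. The added clique edges ensure $d'_k\sim d'_j$, so Spoiler's move is continuous in $H_0$; but Duplicator would have to answer in $G_0$ with a dandelion adjacent to $d_0$ yet distinct from it, and no such vertex exists. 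So Spoiler wins in $s=3$ moves.

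These two facts, fed into Lemma~\ref{lem:lifting} with lifting index $i-1$, yield the theorem: part~4 gives $D^2_{\Sigma_i}(G,H)<s+(i-1)=O(1)$; part~1 gives $D^2_{\Sigma_{i-1}}(G,H)<r+(i-1)<\infty$; and part~2 gives $D^2_{\Pi_i}(G,H)\ge D^2_\E(G_0,H_0)\ge n_0^2/11=\Omega(n^2)$. The delicate point is that one single modification must simultaneously enable Spoiler's constant-length $\Sigma_2$ win and preserve the $\Omega(n_0^2)$ lower bound on $D^2_\E$: adding only dandelion-dandelion edges achieves exactly this, because these new edges are what lets Spoiler connect $d'_j$ to $d'_k$ in one move, yet they are never seen by either player in the $\Sigma_1$ game, which stays on the cycle except in its very last round.
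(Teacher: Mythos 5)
Your proposal is correct and is essentially the paper's own argument: the paper likewise takes $G_0$ from Theorem~\ref{thm:lower-E}, turns the dandelion vertices of $H_0$ into a clique, and invokes parts 1, 2, and 4 of Lemma~\ref{lem:lifting}. You in fact supply two details the paper leaves implicit --- the explicit three-move continuous $\Sigma_2$ strategy and the check that the added dandelion--dandelion edges cannot be exploited in the $\Sigma_1$ game (since $G_0$ has only one dandelion) --- and both check out.
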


\section{An upper bound for $D^k_\logic(G,H)$}\label{sec:upper}

Since $D^k_{\Sigma_i}(G,H)=D^k_{\Pi_i}(H,G)$, the following result holds true
as well for $\Pi_i\cap\fo k$.

\begin{theorem}\label{thm:upper-bound}
Let $G$ and $H$ be structures over the same vocabulary.
If $G$ is distinguishable from $H$
in $\Sigma_i\cap\fo k$, then
$D^k_{\Sigma_i}(G,H)\le (v(G)v(H))^{k-1}+1$.
\end{theorem}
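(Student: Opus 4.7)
My plan is to reduce the theorem to a game statement via Lemma~\ref{lem:game} and then apply a fixed-point / colour-refinement argument, in the spirit of the unrestricted stabilisation bound $D^k(G,H)\le n^{k-1}+1$ cited from~\cite{PikhurkoV11}. By Lemma~\ref{lem:game}, the theorem is equivalent to the statement that if Spoiler wins the $k$-pebble $\Sigma_i$ game on $G$ and $H$, then he wins in at most $(v(G)v(H))^{k-1}+1$ rounds.

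To prove this, I would introduce an augmented game state $(\bar u,\bar v,X,a)$, where $\bar u$ and $\bar v$ are the currently pebbled tuples in $G$ and $H$ of equal length at most $k-1$ (thinking of one pebble as free for the next move), $X\in\{G,H\}$ is the graph Spoiler has just played in, and $a$ counts the alternations he has used so far. Let $W_r$ denote the set of such states from which Spoiler has a winning strategy in at most $r$ further rounds of the $\Sigma_i$ game. The sequence $W_0\subseteq W_1\subseteq\cdots$ is monotone, and once $W_{r+1}=W_r$ the fixed point is reached and $W_r=W_\infty$, since $W_{r+1}$ is computed from $W_r$ by the standard one-step game operator (Spoiler can reach $W_r$ against every Duplicator response, honouring the current alternation budget).

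To bound the stabilisation rate, I would project $W_r$-membership onto $V(G)^{k-1}\times V(H)^{k-1}$ and argue that each non-stabilising step strictly refines the induced partition of pairs of $(k-1)$-tuples. Whenever some state $(\bar u,\bar v,X,a)$ enters $W_{r+1}\setminus W_r$, either the pair $(\bar u,\bar v)$ was not previously winning for any alternation profile (so a new class is split off), or its win-profile across the finitely many $(X,a)$ values is strictly enlarged. Since there are only $(v(G)v(H))^{k-1}$ pairs of $(k-1)$-tuples, the partition can be refined at most $(v(G)v(H))^{k-1}$ times before it stabilises, and one additional round for the very first move from the empty position delivers the bound $(v(G)v(H))^{k-1}+1$.

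The main obstacle will be handling the auxiliary alternation profile $(X,a)$: a naive counting over triples $(\bar u,\bar v,X,a)$ would inflate the bound by a factor of order $2i$, whereas the theorem claims a bound independent of $i$. The crux of the proof will therefore be to exhibit the monotonicity more carefully, arguing that the alternation profile attached to each $(k-1)$-tuple pair evolves coherently with—and is ultimately determined by—the underlying $(k-1)$-tuple refinement, so that $(X,a)$ contributes no independent stabilisation steps beyond those already forced by the tuple count.
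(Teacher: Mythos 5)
Your reduction to the game via Lemma~\ref{lem:game} and your decision to work with positions consisting of $(k-1)$-tuples (one pebble held in hand for the next move) both match the paper. But the proposal stalls exactly where you say ``the crux'' lies, and that crux is the entire content of the proof: you never show that the annotation $(X,a)$ contributes no independent stabilisation steps, and the partition-refinement mechanism you offer does not deliver this on its own. A fixed pair $(\baru,\barv)$ of $(k-1)$-tuples can genuinely become winning for different annotations at widely separated stages of the iteration --- Theorem~\ref{thm:succinct} exhibits graphs where one annotation of the empty position enters the winning region after $O(1)$ rounds while another needs $\Omega(n^2)$ rounds --- so ``win-profile strictly enlarged'' events are neither in bijection with, nor obviously dominated by, ``new pair appears'' events. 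Counting enlargements charges each pair up to $2i$ times, which is exactly the factor you need to kill and do not.

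The paper closes the gap by a different mechanism: instead of bounding the global fixed-point iteration, it tracks a single play in which Spoiler always follows a strategy optimal for the remaining game. Writing $R(\Xi,a,\baru,\barv)$ for the optimal number of remaining rounds, it establishes (i) that $R$ strictly decreases from each round to the next, (ii) that the budget $a_s$ is non-increasing along the play and that equal budgets force equal graphs $\Xi_s$, and (iii) the monotonicity $R(\Xi,a,\tilu,\tilv)\le R(\Xi',a',\tilu,\tilv)$ whenever $a'<a$: a larger budget can only help Spoiler, even if he must first spend one jump to switch graphs. Combining (i) and (iii), if the same $(k-1)$-tuple pair occurred at rounds $s<s'$, the later occurrence would have a smaller or equal budget and hence, by monotonicity, an $R$-value at least as large as the earlier one, contradicting the strict decrease; so the visited $(k-1)$-tuple pairs are pairwise distinct and the play lasts at most $(v(G)v(H))^{k-1}+1$ rounds. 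If you wish to keep your framing, this monotonicity in the alternation budget is the lemma you must import and prove; without it, or an equivalent substitute, your argument only yields a bound of order $i\,(v(G)v(H))^{k-1}$.
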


\begin{proof}
By Lemma \ref{lem:game}, we have to prove that, if
Spoiler has a winning strategy in the $r$-round $k$-pebble $\Sigma_i$ game on $G$ and $H$
for some $r$, then he has a winning strategy in the game with $v(G)v(H)+1$ rounds.

The proof is based on a general game-theoretic argument.
Consider a two-person game, where the players follow
some fixed strategies and one of them wins.
Then the length of the game cannot exceed the total number
of all possible positions because once a position occurs twice,
the play falls into an endless loop.
Here it is assumed that the players' strategies are \emph{positional},
that is, that a strategy of a player
maps a current \emph{position} (rather than the sequence of all previous positions)
to one of the moves available for the player.

Implementing this scenario for the $\Sigma_i$ game,
we have to overcome two complications. First, we have to ``reduce''
the space $V(G)^k\times V(H)^k$ of all possible positions in the game,
which has size $(v(G)v(H))^k$. Second, we have take care of the fact that,
if $i>1$, then Spoiler's play can hardly be absolutely memoryless in the sense that
he apparently has to remember the number of jumps left to him or, at least,
the graph in which he moved in the preceding round.

We begin with some notation. Let $\baru$ and $\barv$ be tuples of
vertices in $G$ and $H$, respectively, having the same length
no more than $k$. Given $\Xi\in\{\Sigma,\Pi\}$ and $a\ge1$,
let $R(\Xi,a,\baru,\barv)$ be the minimum $r$ such that Spoiler
has a winning strategy in the $\Xi_a$ game on $G$ and $H$ starting
from the initial position $(\baru,\barv)$.
Given a $k$-tuple $\barw$ and $j\le k$, let $\sigma_j\barw$ denote
the $(k-1)$-tuple obtained from $\barw$ by removal of the $j$-th coordinate.
Note that, if $\baru\in V(G)^k$ and $\barv\in V(H)^k$, then
\begin{equation}
  \label{eq:RRs}
 R(\Xi,a,\baru,\barv)=\min_{1\le j\le k}R(\Xi,a,\sigma_j\baru,\sigma_j\barv).
\end{equation}

In order to estimate the length of the $k$-pebble $\Sigma_i$ game on $G$ and $H$,
we fix a strategy for Duplicator arbitrarily and consider the strategy for
Spoiler as described below. For $i\ge1$, we will say that $\barC_s=(\Xi_s,a_s,\baru_s,\barv_s)$ is the
\emph{position after the $s$-th round} if
\begin{itemize}
\item 
$\Xi_s=\Sigma$ if in the $s$-th round Spoiler moved in $G$ and $\Xi_s=\Pi$ if he
moved in $H$;
\item 
during the first $s$ rounds Spoiler jumped from one graph to another $i-a_s$ times;
\item 
after the $s$-th round the pebbles $p_1,\ldots,p_k$ are placed on the vertices
$\baru\in V(G)^k$ and $\barv\in V(H)^k$ (we suppose that in the first round
Spoiler puts all $k$ pebbles on one vertex).
\end{itemize}
Furthermore, we will say that $\tilC_s=(\Xi_s,a_s,\tilu_s,\tilv_s)$ is the
\emph{position before the $(s+1)$-th move} if in the $(s+1)$-th round
Spoiler moves the pebble $p_j$ and $\tilu_s=\sigma_j\baru_s$ and $\tilv_s=\sigma_j\barv_s$.

Let us describe Spoiler's strategy. He makes the first move according to an arbitrarily prescribed
strategy that is winning for him in the $D^k_{\Sigma_i}(G,H)$-round $k$-pebble $\Sigma_i$
game on $G$ and $H$. If this move is in $G$, let $\Xi_1=\Sigma$ and $a_1=i$;
otherwise $\Xi_1=\Pi$ and $a_1=i-1$. After Duplicator responses, the position $\barC_1$ is specified.
Note that $R(\barC_1)<D^k_{\Sigma_i}(G,H)$.

Suppose that the $s$-th round has been played and after this we have the position $\barC_s=(\Xi_s,a_s,\baru_s,\barv_s)$.
In the next round Spoiler plays with the pebble $p_j$ for the smallest value of $j$ such that
\begin{equation}
  \label{eq:tilRbar}
R(\tilC_s)=R(\barC_s).  
\end{equation}
Such index $j$ exists by \refeq{RRs}. Spoiler makes his move according to a prescribed
strategy that is winning for him in the $R(\barC_s)$-round $k$-pebble $(\Xi_s)_{a_s}$
game on $G$ and $H$ with the initial position $(\tilu_s,\tilv_s)$.
If he moves in the same graph as in the $s$-th round, then 
$\Xi_{s+1}=\Xi_s$ and $a_{s+1}=a_s$; otherwise $\Xi_{s+1}$ gets flipped and $a_{s+1}=a_s-1$.

Note that $a_{s+1}\le a_s$ and, if $\Xi_{s+1}\ne\Xi_s$, then $a_{s+1}<a_s$.
Since Spoiler in each round uses a strategy optimal for the rest of the game,
\begin{equation}
  \label{eq:RsRss}
 R(\barC_{s+1})<R(\barC_s).
\end{equation}
It follows that the described strategy allows Spoiler to win the $\Sigma_i$
game on $G$ and $H$ in at most $D^k_{\Sigma_i}(G,H)$ moves.

We now estimate the length of the game from above.
Suppose that after the $t$-th round Duplicator is still alive.
Due to \refeq{tilRbar} and \refeq{RsRss}, 
$$
R(\tilC_1)>R(\tilC_2)>\ldots>R(\tilC_t).
$$
It follows that the elements of the sequence $\tilC_1,\tilC_2,\ldots,\tilC_t$
are pairwise distinct. We conclude from here that the elements of the sequence 
$(\tilu_1,\tilv_1),(\tilu_2,\tilv_2),\ldots,(\tilu_t,\tilv_t)$
are pairwise distinct too. Indeed, let $s'>s$. If $a_{s'}=a_s$, then
$\Xi_s=\Xi_{s'}$. Since $\tilC_s\ne\tilC_{s'}$, we have $(\tilu_s,\tilv_s)\ne(\tilu_{s'},\tilv_{s'})$. 
If $a_{s'}<a_s$, the same inequality follows from the fact that
$R(\Xi,a,\tilu,\tilv)\le R(\Xi',a',\tilu,\tilv)$ whenever $a'<a$.

Since $(\tilu_s,\tilv_s)$ ranges over $V(G)^{k-1}\times V(H)^{k-1}$, we conclude that
$t\le(v(G)v(H))^{k-1}$ and, therefore, Spoiler wins in the round $(v(G)v(H))^{k-1}+1$
at latest.
\end{proof}

\section{The collapse of the alternation hierarchy of $\fo2$ over uncolored graphs}\label{sec:uncol-collapse}

We here show that the quantifier alternation hierarchy of $\fo2$ over uncolored graphs
collapses to the second level.

\begin{theorem}\label{thm:uncol-collapse}
  If a class of uncolored graphs is definable by a first-order formula with two variables,
then it is definable by a first-order formula with two variables and one quantifier alternation. 
\end{theorem}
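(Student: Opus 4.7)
My plan is to combine an Ehrenfeucht-Fra\"iss\'e game analysis with a structural reduction of $\fo 2$-types over uncolored graphs. Suppose $G$ and $H$ are uncolored graphs distinguishable in $\fo 2$; the goal is to produce a distinguishing sentence in $(\Sigma_2\cup\Pi_2)\cap\fo 2$.

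The first step is to identify a finite family of ``reduced'' 1-types that capture the $\fo 2$-behavior of vertices in uncolored graphs. A vertex $v$ is assigned a type by answering a short list of questions such as: does $v$ have a neighbor, a non-neighbor distinct from itself, a universal neighbor, an isolated non-neighbor, and so on. Because the edge relation is symmetric and the vocabulary has no unary symbols, deeper nestings of quantifiers collapse: for example, ``$v$ has a neighbor which has a neighbor'' is automatically true whenever ``$v$ has a neighbor,'' and similar cancellations let us represent every $\fo 2$-definable 1-type as a Boolean combination of $(\Sigma_1\cup\Pi_1)\cap\fo 2$ formulas in one free variable. A back-and-forth argument formalized via a pebble-game analog of Lemma~\ref{lem:game} will establish this collapse.

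The second step is to deduce that any $\fo 2$-sentence is equivalent to a Boolean combination of sentences of the two forms $\exists x\,\tau(x)$ and $\exists x\exists y\,(\tau_1(x)\wedge\tau_2(y)\wedge x\neq y\wedge R(x,y))$ for $R\in\{E,\neg E\}$ and $\tau,\tau_1,\tau_2$ ranging over the reduced 1-types. When each $\tau$ is a Boolean combination of $\Sigma_1\cup\Pi_1$ formulas, each such sentence lies in $\Sigma_2\cap\fo 2$ because every nested quantifier sequence has at most one alternation. Since the paper's notion of $\Sigma_i$ and $\Pi_i$ counts alternations only along nested quantifier paths and is preserved under Boolean connectives, an arbitrary Boolean combination of such sentences remains in $(\Sigma_2\cup\Pi_2)\cap\fo 2$.

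The main obstacle is the collapse in Step 1: proving that the 1-type lattice is finite and already settled at alternation depth one. A naive induction on formula depth lets the alternation count grow unboundedly; consider, for instance, that ``every vertex has a universal neighbor'' is $\Pi_3$ in its direct formulation but over uncolored graphs is equivalent to the $\Sigma_2$ assertion that at least two universal vertices exist. The right argument should show, by simultaneously analyzing $G$ and $H$ via the pebble game, that any Spoiler strategy exploiting deeply nested alternations can be shortcut to one using at most a single jump, using the structural rigidity of uncolored undirected graphs in which vertices can be discriminated only through their interaction with the same symmetric edge relation.
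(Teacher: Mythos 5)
Your Step 1 rests on a claim that is false: the lattice of $\fo2$-definable $1$-types over uncolored graphs is \emph{not} finite, and it is not captured by a ``short list of questions'' of bounded quantifier depth. The paper's proof is built around exactly the invariant your collapse would erase: iteratively peel off the isolated (resp.\ universal) vertices of a graph and assign each vertex the level at which it disappears. This ``rank'' of a vertex is unbounded over the class of all graphs, and for each $s$ there is a formula $\Phi_s(x)$ of quantifier depth $s$ (detecting ``$x$ has rank at most $s$ and the appropriate universal/isolated type'') such that the $\Phi_s$ are pairwise inequivalent. So deeper nestings do \emph{not} collapse to bounded depth, and no finite family of reduced $1$-types determines the $\fo2$-theory. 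What saves the theorem --- and what your proposal is missing --- is a different collapse: each $\Phi_s$ can be written using only nested \emph{universal} quantifiers, i.e.\ with zero alternations, even though its depth grows with $s$. The paper then shows (Lemma~\ref{lem:indist}, via the $2$-pebble game) that a sentence of quantifier depth $m$ can only depend on the rank data up to level $m$ together with a bounded ``head type,'' and (Lemma~\ref{lem:dist}) that each such type class is defined by wrapping the alternation-free formulas $\Phi_s$ in a single additional quantifier block, giving alternation one. Your proposal has no mechanism for producing these unbounded-depth, alternation-free unary formulas; a reduction to bounded-depth $1$-types cannot work.

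Your Step 2 normal form --- that every $\fo2$ sentence over uncolored graphs is a Boolean combination of sentences asserting that certain $1$-types are realized and that certain pairs of types are joined by an edge or non-edge --- is also not a standard fact you can invoke; over this class it is essentially equivalent to the full structural classification (same type $\Rightarrow$ $\fo2$-indistinguishable) that constitutes the bulk of the paper's argument. You correctly identify Step 1 as ``the main obstacle,'' and your example about universal neighbors shows you see the phenomenon, but the proposal supplies neither the right invariant (vertex rank via iterated peeling) nor the observation that makes the theorem true (rank is definable without alternation at the cost of depth), so the proof does not go through as described.
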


The proof takes the rest of this section.

The {\em complement\/} of a graph $G$
is the graph on the same vertex set $V(G)$ with any two vertices adjacent if
and only if they are not adjacent in~$G$.
We call a graph \emph{normal} if it has neither isolated nor universal vertex.
Note that a graph is normal iff its complement is normal.
For every graph $G$ with at least 2 vertices we inductively define its \emph{rank}~$\rank G$.
\begin{itemize}
\item 
Graphs of rank 1 are exactly the empty, the complete, and the normal graphs.
\item 
Graphs of rank 2 are exactly the graphs obtained by adding universal vertices
to empty graphs, or isolated vertices to complete graphs, or either universal or
isolated vertices to normal graphs.
\item 
If $i\ge2$,
disconnected graphs of rank $i+1$ are obtained from connected graphs of rank $i$ 
by adding a number of isolated vertices.
\item 
For every $i$,
connected graphs of rank $i$ are exactly complements of
disconnected graphs of rank $i$.
\end{itemize}

A simple inductive argument on the number of vertices shows that all graphs
with at least two vertices get ranked. Indeed, if a graph $G$ is normal,
complete, or empty, it receives rank 1. This includes the case that $G$
has two vertices. If $G$ does not belong to any of these three classes,
it has either isolated or universal vertices. 
Since graphs with universal vertices are connected and are the complements
of graphs with isolated vertices, it suffices to consider the case
that $G$ has isolated vertices. Remove all of them from $G$
and denote the result by $G'$. Note that $G'$ has less vertices than $G$
but still more than one vertex. By the induction assumption, $G'$ is ranked.
If $\rank G'=1$, then $\rank G=2$ by definition.
If $\rank G'>1$ (i.e., $G'$ is complete or normal), then $G'$ must be connected (for else it would be normal).
Therefore, $\rank G=\rank G'+1$ by definition. 

We now introduce a ranking of vertices in a graph $G$.
If $\rank G=1$, then all vertices of $G$ get \emph{rank $1$}.
Suppose that $\rank G>1$.
If $G$ is disconnected, it has at least one isolated vertex;
if $G$ is connected, there is at least one universal vertex.
Denote the set of such vertices by $\partial G$.
Every vertex in $\partial G$ is assigned \emph{rank $1$}.
If $u\notin\partial G$, then it is assigned \emph{rank} one greater than
the rank of $u$ in the graph $G-\partial G$.
The rank of a vertex $u$ in $G$ will be denoted by $\rank u$.
It ranges from the lowest value $1$ to the highest value~$\rank G$.
Note that a vertex $u$ with $\rank u<\rank G$ has the same adjacency to all other vertices 
of equal or higher rank.

Given an integer $m\ge1$ and a graph $G$ with $\rank G>m$, we define
the \emph{$m$-tail type} of $G$ to be the sequence $(t_0,t_1,\ldots,t_m)$ where
$t_0\in\{\conn,\disc\}$ depending on whether $G$ is connected or disconnected
and, for $i\ge1$, $t_i\in\{\thin,\thick\}$ depending on whether $G$ has one or more
vertices of rank~$i$.

Furthermore, we define
the \emph{kernel} of a graph $G$ to be its subgraph induced on the vertices of rank $\rank G$.
Note that the kernel of any $G$ is a graph of rank 1. We define the \emph{head type} of $G$ to be
$\empt$, $\compl$, or $\norma$ depending on the kernel.
We say that graphs $G$ and $H$ are of the same \emph{type} if 
$\rank G=\rank H$, $G$ and $H$ have the same head type, and if $\rank G>1$,
then they also have the same $m$-tail type for $m=\rank G-1$.
The single-vertex graph has its own type.

\begin{lemma}\label{lem:indist}\mbox{}

  \begin{enumerate}
\item 
If $G$ and $H$ are of the same type, then $D^2(G,H)=\infty$.
  \item 
  If $G$ and $H$ have the same $m$-tail type, then $D^2(G,H)\ge m$.
  \end{enumerate}
\end{lemma}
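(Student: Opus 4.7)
The plan is to prove both parts via explicit Duplicator strategies in the 2-pebble Ehrenfeucht-Fra\"iss\'e game (Lemma~\ref{lem:game}), driven by a structural observation about the rank filtration. First I would note that for vertices $u,v\in V(G)$ with $\rank u = i \le j = \rank v$, the edge $uv$ is present in $G$ iff $G^{(i-1)}$ is connected: the vertex $u$ lies in $\partial G^{(i-1)}$, which consists of universal vertices of $G^{(i-1)}$ when $G^{(i-1)}$ is connected and of isolated vertices when it is disconnected, while $v$ lies in $G^{(i-1)}$. From the inductive definition, a graph of rank $\ge 3$ satisfies the alternation rule that $G^{(1)}$ has the opposite connectedness of $G$; iterating, the connectedness of $G^{(j)}$ is determined by $t_0$ and the parity of $j$ as long as $j \le \rank G - 2$. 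In particular, under the hypothesis of either part, $G^{(j)}$ and $H^{(j)}$ share connectedness at every level $j \le m-1$, so adjacency between rank-matched pebbled pairs of rank $\le m$ is automatically consistent between the two graphs.

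For Part~1, I would induct on $r = \rank G = \rank H$. The base case $r=1$ splits into the three head types. The empty and complete cases are immediate since all distinct pairs of vertices have a uniform adjacency. The normal case uses the fact that in a normal graph every vertex has at least one neighbor and at least one non-neighbor distinct from itself, so Duplicator can always realize any required adjacency to a single previously pebbled vertex. For $r \ge 2$, since $G$ and $H$ share $t_0$, I may assume (passing to complements if needed) both are connected, so $\partial G$ and $\partial H$ consist of universal vertices. The agreement of head and tail types forces $G^{(1)}$ and $H^{(1)}$ to have the same type at rank $r-1$, so Duplicator has an infinite-horizon strategy on them by induction. She lifts it to $G,H$ by stratifying her response: match $\partial G$-moves to $\partial H$-moves (possible because $t_1$ agrees on thin vs thick), and route non-$\partial$ moves through the inductive strategy. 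Cross-stratum adjacency is uniform (every $\partial$-vertex is universal in its graph), so consistency between strata is automatic.

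For Part~2 I would give a rank-matching strategy: when Spoiler pebbles a vertex of rank $i$, Duplicator responds with a vertex of rank $i$ in the other graph if $i \le m$, and with some vertex of rank $>m$ if $i>m$, subject to distinctness and to adjacency consistency with the other pebble. Feasibility of such a response follows from the $t_i$-agreements for $i \le m$ (providing two distinct rank-$i$ vertices when both pebbles demand rank $i$) together with the assumption $\rank G, \rank H > m$ (ensuring the high-rank bucket is nonempty). Adjacency preservation is immediate from the preliminary observation whenever at least one pebble sits at rank $\le m$, since the governing connectedness level is then $\le m-1$ and matches between $G$ and $H$. The main obstacle is the case where both pebbles end up at rank $>m$; the governing connectedness then lies at a level possibly not controlled by the $m$-tail type. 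I would handle this by refining Duplicator's choice of high-rank vertex so that the minimum rank among her two pebbles mirrors the minimum rank among Spoiler's, exploiting the alternation rule at levels still below $\rank G - 1$; since only $m$ rounds are played and at most two pebbles are active at any time, the coordination is tractable, and the matching of the relevant connectedness follows either from the alternation rule or from the $t_i$ agreements near level $m+1$.
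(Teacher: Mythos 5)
Your Part 1 is sound and is essentially a recursive reformulation of the paper's argument: the paper plays the whole game at once, keeping the pebbled pair ``straight'' (both vertices in slices of the same index), observing that inter-slice adjacency depends only on the pair of indices and agrees between $G$ and $H$, and handling the normal kernel separately; your induction on rank with the $\partial G\leftrightarrow\partial H$ stratification proves the same facts level by level. Your preliminary observation (adjacency governed by the connectivity at the level of the lower-ranked vertex) is the paper's remark that a non-kernel vertex has uniform adjacency to all vertices of equal or higher rank.

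Part 2 has a genuine gap. Pure rank-matching is not maintainable: once the pebbled pair is skew --- say $x_1$ has rank $i$ in one graph while its counterpart $x_2$ has rank $j<i$ in the other, with $x_2$ of universal type --- and Spoiler pebbles a vertex of rank $\ge j$ that is non-adjacent to $x_1$, then every vertex of rank $\ge j$ in the other graph is adjacent to $x_2$, so Duplicator is forced down to rank $j-1$ (where vertices are of isolated type). Your strategy does not account for this forced descent, and the rule about mirroring ``the minimum rank among Spoiler's pebbles'' is not even well defined when both pebbles sit above level $m$: the $m$-tail type says nothing about that region, where $G$ and $H$ may have different ranks, different numbers of levels, and different head types. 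What the paper supplies, and what is missing from your argument, is a potential function: $S_r$ is the smallest slice index occupied by a skew pebbled pair after round $r$ (set to $m+1$ for straight pairs), and Duplicator can always move so that $S_{r+1}\ge S_r-1$ (and $S_{r+1}\ge m-1$ when leaving a straight pair in the kernels), exploiting the fact that the universal/isolated type of a vertex flips as its rank increases by one. Since she cannot lose while $S_r>1$, this yields survival for $m-1$ rounds and hence $D^2(G,H)\ge m$. Without this (or an equivalent) quantitative control on how fast Duplicator is pushed down the rank hierarchy, the stated bound is not established.
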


\begin{lemma}\label{lem:dist}\mbox{}

  \begin{enumerate}
  \item 
For each $m$-tail type, the class of graphs of this type is definable by
a first-order formula with two variables and one quantifier alternation.
\item 
For each $G$, the class of graphs of the same type as $G$ is definable by
a first-order formula with two variables and one quantifier alternation. 
  \end{enumerate}
\end{lemma}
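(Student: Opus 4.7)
My plan is to build, inductively in $i$, a sequence of $\Sigma_1$ formulas $P_i(x)$ in two variables such that, whenever $\rank G > i$, the vertices satisfying $P_i$ are exactly those of rank at least $i$. In the connected case I would set $P_1 \equiv \top$, $P_2(x) \equiv \exists y(y \neq x \land \lnot E(x,y))$, and $P_{k+1}(x) \equiv P_k(x) \land \exists y(C_k(x,y) \land P_k(y))$, where $C_k(x,y)$ is $E(x,y)$ when $k+1$ is odd and is $y \neq x \land \lnot E(x,y)$ when $k+1$ is even; shadowing the variables inside the nested existentials keeps each $P_i$ in $\Sigma_1$. A symmetric sequence $P'_i$ handles the disconnected case. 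Correctness of $P_i$ follows inductively from the alternating adjacency between layers (each $R_i$--$R_j$ pair with $i<j$ is adjacent iff $i$ is odd), which is valid as long as the peeling has not yet reached the kernel.

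From $P_i$ I would form $R_i(x) \equiv P_i(x) \land \lnot P_{i+1}(x)$, a Boolean combination of $\Sigma_1$ and $\Pi_1$ and so of alternation depth $0$; hence ``$|R_i| \geq 1$'' and ``$|R_i| \geq 2$'' become $\Sigma_2$ (two variables suffice via shadowing), and the thin/thick predicates together with their negations have alternation depth $1$. Next I would express ``$\rank G \geq m+1$'' as the conjunction, over $i = 1, \ldots, m$, of ``the subgraph defined by $P_i$ has rank at least $2$''---equivalently, it has a $\partial$-vertex (universal or isolated according to parity) and is not rank $1$ (an appropriate edge or non-edge exists among $P_i$-vertices). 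Each conjunct is $\Sigma_2$, and the conjunction is sound because a rank-$1$ kernel fails the ``has $\partial$ and is not rank $1$'' test regardless of its head type. The formula $\phi_T$ for an $m$-tail type $T = (t_0,\ldots,t_m)$ is then the conjunction of the $t_0$-indicator (``has universal'' for $\conn$ or ``has isolated'' for $\disc$), the ``$\rank \geq m+1$'' formula, and the $m$ layer-size predicates dictated by $t_1,\ldots,t_m$; all pieces have alternation depth $1$, proving Part~1. For Part~2 I would further conjoin the $\Pi_2$ negation of ``$\rank G \geq r+1$'' together with the head-type predicate asserting that the induced subgraph on $\{x : \lnot R_1(x) \land \cdots \land \lnot R_{r-1}(x)\}$ is $\empt$, $\compl$, or $\norma$---a $\forall\forall$ statement wrapping depth-$0$ layer-membership tests, hence again depth $1$.

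The main obstacle I anticipate is verifying that $P_i$ faithfully captures $R_{\geq i}$ at the boundary $\rank G = m+1$, where the kernel sits immediately beneath layer $m$ and the naive identities $P_m = R_{\geq m}$ and $P_{m+1} = R_{\geq m+1}$ are in danger of failing. The saving structural observations are that the head types available at each rank are restricted by parity and by connectivity (for example, connected rank $r$ forces the kernel into $\{\empt, \norma\}$ when $r$ is even and into $\{\compl, \norma\}$ when $r$ is odd, and symmetrically in the disconnected case), and that a would-be singleton kernel actually collapses the rank by one, so the configurations that could let $P_{m+1}$ misbehave fall outside the regime where $\phi_T$ is intended to hold. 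Carrying out this case analysis is the real labor of the proof.
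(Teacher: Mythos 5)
Your overall strategy is the same as the paper's: build two-variable, single-quantifier-block formulas that identify the vertex ranks layer by layer, then combine them with one alternation to pin down the tail type, the rank, and the head type. The paper's layer formulas $\Phi_s$ are purely universal ($\Pi_1$) and capture ``rank $\le s$ and of the type matching the parity of $s$''; yours are the existential duals capturing ``rank $\ge i$''. Either normal form can be made to work, but two points in your write-up are genuine problems rather than routine bookkeeping.

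First, the alternation count of your head-type predicate is wrong as stated. ``The kernel is $\norma$'' is not a $\forall\forall$ statement: it asserts that every kernel vertex has both a neighbour and a non-neighbour inside the kernel, so it needs a $\forall\exists$ pattern. Worse, your kernel-membership test $\neg R_1(x)\wedge\cdots\wedge\neg R_{r-1}(x)$, with $R_i=P_i\wedge\neg P_{i+1}$, is a Boolean mixture of $\Sigma_1$ and $\Pi_1$ subformulas; placing it under the inner $\exists y$ produces nested quantifier chains of the form $\forall x\,\exists y\,\forall x$, i.e.\ two alternations, so the formula as described leaves the target class. The repair is to use the purely existential $P_r(x)$ as the kernel test (equivalent once rank exactly $r$ has been certified); this is in effect what the paper does by folding the type information into a single $\Pi_1$ formula $\Phi_{m+1}$, so that the head type is expressed as $\forall x\,\Phi_{m+1}(x)$ or $\neg\exists x\,\Phi_{m+1}(x)$ with no explicit relativization to the kernel. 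Second, the correctness of $P_{m+1}$ on graphs of rank exactly $m+1$ --- needed both for $R_m=P_m\wedge\neg P_{m+1}$ in the thin/thick predicates and for the kernel test --- is precisely the boundary case you defer. The parity/connectivity restriction on head types that you invoke does close it (for connected graphs with $m+1$ odd the kernel is $\compl$ or $\norma$, so every kernel vertex has a neighbour of rank at least $m$; the even and disconnected cases are symmetric), but your proposal stops at announcing that this verification is ``the real labor of the proof''. You also never argue that your rank-lower-bound conjunction \emph{fails} on graphs of rank at most $m$, i.e.\ that the formulas behave correctly outside the class they are designed for; this direction is needed for definability and does not follow from your inductive claim about $P_i$, which is only stated under the hypothesis $\rank G>i$.
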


Let $C$ be a class of graphs definable by a formula with two variables
of quantifier depth less than $m$. By Lemma \ref{lem:indist},
$C$ is the union of finitely many classes of graphs of the same type
(each of rank at most $m$)
and finitely many classes of graphs of the same $m$-tail type.
By Lemma \ref{lem:dist},
$C$ is therefore definable by a first-order formula with 
two variables and one quantifier alternation.
To complete the proof of Theorem \ref{thm:uncol-collapse},
it remains to prove the lemmas.

\begin{proofof}{Lemma \ref{lem:indist}}
{\bf 1.}~Let $\rank G=\rank H=m+1$. Let $V(G)=U_1\cup\ldots U_{m+1}$
and $V(H)=V_1\cup\ldots V_{m+1}$ be the partitions of the vertex
sets of $G$ and $H$ according to the ranking of vertices.
We will describe a winning strategy for Duplicator in the two-pebble
game on $G$ and $H$. We will call a pair of pebbled vertices
$(u,v)\in V(G)\times V(H)$ \emph{straight} if $u\in U_i$ and $v\in V_i$
for the same $i$. Note that both the kernels $U_{m+1}$ and $V_{m+1}$
contain at least 2 vertices and, since $G$ and $H$ are of the same type,
$|U_i|=1$ iff $|V_i|=1$. This allows Duplicator to play so that
the vertices pebbled in each round form a straight pair and the equality
relation is never violated. If the head type of $G$ and $H$ is 
$\empt$ or $\compl$, this strategy is winning because the adjacency of
vertices $u\in U_i$ and $u'\in U_j$ depends only on the indices $i$ and $j$
and is the same as the adjacency of any vertices $v\in V_i$ and $v'\in V_j$.
It remains to notice that Duplicator can resist also when the game
is played inside the normal kernels $U_{m+1}$ and $V_{m+1}$.
In this case she never loses because, for every vertex in a normal graph, she can
find another adjacent or non-adjacent vertex, as she desires.

{\bf 2.}
We have to show that Duplicator can survive in at least $m-1$ rounds.
Note that both $\rank G\ge m+1$ and $\rank H\ge m+1$.
Similarly to part 1, consider partitions $V(G)=U_1\cup\ldots U_{m+1}$
and $V(H)=V_1\cup\ldots V_{m+1}$, where $U_{m+1}$ and $V_{m+1}$ now consist
of the vertices whose rank is higher than $m$.
In the first round Duplicator plays so that the pebbled vertices form a straight pair.
However, starting from the second round it can be for her no more possible
to keep the pebbled pairs straight.
Call a pair of pebbled vertices
$(u,v)\in V(G)\times V(H)$ \emph{skew} if $u\in U_i$ and $v\in V_j$
for different $i$ and $j$. 
Assume that Spoiler uses his two pebbles alternatingly
(playing with the same pebble in two successive rounds gives him no
advantage). Let $(u_r,v_r)$ denote the
pair of vertices pebbled the $r$-th round. If $(u_r,v_r)$ is skew,
let $S_r$ denote the minimum $s$ such that $u_r\in U_s$ or $v_r\in V_s$. 
If $(u_r,v_r)$ is straight, we set $S_r=m+1$.
Our goal is to show that, if $S_r=m+1$, then Duplicator has a non-losing move in the next round
such that $S_{r+1}\ge m-1$ and that, as long as $1<S_r\le m$, she has a non-losing move
such that $S_{r+1}\ge S_r-1$. This readily implies that Duplicator does not lose
the first $m-1$ rounds.

To avoid multiple treatment of symmetric cases, we use the following notation.
Let $\{G_1,G_2\}=\{G,H\}$. Let $y_1\in G_1$ and $y_2\in G_2$ denote the vertices
being pebbled in the round $r+1$, and let $x_1\in G_1$ and $x_2\in G_2$ be the vertices
pebbled  in the round $r$ (in the previous notation, $\{x_1,x_2\}=\{u_r,v_r\}$ and 
$\{y_1,y_2\}=\{u_{r+1},v_{r+1}\}$). 

Suppose first that $\{x_1,x_2\}$ is a straight pair contained in the slice $U_i\cup V_i$.
If $i\le m$, it makes no problem for Duplicator to move so that the pair $\{y_1,y_2\}$
is also straight. This holds true also if $i=m+1$ and Spoiler pebbles $y_a\in U_j\cup V_j$
with $j\le m$. Thus, in these cases $S_{r+1}=S_r=m+1$. However, if $i=j=m+1$,
moving straight can be always Duplicator's loss. In this case she survives by
pebbling a vertex $y_{3-a}$ of rank $m$ or $m-1$, depending on the adjacency
relation between $x_a$ and $y_a$. In this case $S_{r+1}\ge m-1$.

Let us accentuate the property of the vertex ranking that is beneficial to Duplicator
in the last case. Recall that, if a vertex $u$ is not in the graph kernel,
it has the same adjacency to all other vertices 
of equal or higher rank. If $u$ is adjacent to all such vertices,
we say that $u$ is of \emph{universal type}; otherwise we say that 
it is of \emph{isolated type}. Duplicator uses the fact that the type
of a vertex gets flipped when its rank increases by one.

Suppose now that $\{x_1,x_2\}$ is a skew pair. 
Let $x_1\in U_i\cup V_i$ and $x_2\in U_j\cup V_j$ and,
w.l.o.g., assume that $i>j$.
Since $j=S_r$, it is supposed that $j>1$.
We consider three cases depending on Spoiler's move $y_a$.
In the most favorable for Duplicator case, $\rank y_a<j$.
Then Duplicator responds with a vertex $y_{3-a}$ of the same rank,
resetting $S_{r+1}$ back to the initial value $m+1$.
If Spoiler pebbles a vertex $y_2$ of $\rank y_2\ge j$, then Duplicator responds with 
a vertex $y_1$ of $\rank y_1=j$, keeping $S_{r+1}\ge j=S_r$ (unchanged or reset to $m+1$).
Finally, consider the case when Spoiler pebbles a vertex $y_1$ of $\rank y_1\ge j$.
Assume that $x_2$ is of universal type (the other case is symmetric).
If $y_1$ and $x_1$ are adjacent, then Duplicator responds with 
a vertex $y_2$ of $\rank y_2=i$, keeping $S_{r+1}\ge S_r$.
If $y_1$ and $x_1$ are not adjacent, then Duplicator responds with 
$y_2$ of $\rank y_2=j-1$, which is of isolated type.
This is the only case when $S_{r+1}=S_r-1$ decreases.
\end{proofof}

\begin{proofof}{Lemma \ref{lem:dist}}
{\bf 1.}~Consider an $m$-tail type $(t_0,t_1,\ldots,t_m)$.
Assume that $t_0=\conn$ (the case of $t_0=\disc$ is similar).
Let $\sim$ denote the adjacency relation.
We inductively define a sequence of formulas $\Phi_s(x)$
with occurrences of two variables $x$ and $y$ and with one free variable:
\begin{eqnarray*}
  \Phi_1(x)&\feq&\A y\,(y\sim x\vee y=x),\\
 \Phi_{2k}(x)&\feq&\A y\,(\Phi_{2k-1}(y)\vee{y\not\sim x}),\\
\Phi_{2k+1}(x)&\feq&\A y\,(\Phi_{2k}(y)\vee{y\sim x}\vee{y=x}).
\end{eqnarray*}
Here $\Phi_{2k-1}(y)$ is obtained from $\Phi_{2k-1}(x)$ by swapping $x$ and $y$.
A simple inductive argument shows that, if $G$ is a connected graph and $\rank G$
is greater than an odd (resp.\ even) integer $s$, then $G,v\models\Phi_s(x)$ exactly when
the vertex $v$ is of universal (resp.\ isolated) type and $\rank v\le s$.

Furthermore, we define a sequence of closed formulas $\Psi_s$ 
with alternation number~1:
\begin{eqnarray*}
  \Psi_1&\feq&\E x\Phi_1(x)\und\E x\neg\Phi_1(x),\\
 \Psi_2&\feq&\E x\Phi_1(x)\und\E x\Phi_2(x)\und\E x(\neg\Phi_1(x)\und\neg\Phi_2(x)),\\
\Psi_s&\feq&\E x\Phi_1(x)\und\E x\Phi_2(x)\und\bigwedge_{i=3}^s\E x(\Phi_i(x)\und\neg\Phi_{i-2}(x))
\und\E x(\neg\Phi_{s-1}(x)\und\neg\Phi_s(x)),
\end{eqnarray*}
 where $s\ge3$.
Note that a graph $G$ satisfies $\Psi_s$
if and only if $G$ is connected and $\rank G>s$. 

We are now able to define the class of graphs of $m$-tail type $(t_0,t_1,\ldots,t_m)$
by the conjunction
$$
\Psi_m\und\bigwedge_{i=1}^m\Tau_i,
$$
where
$$
\Tau_i\feq\E x\E y\,({x\ne y}\und \Phi_i(x)\und\neg\Phi_{i-2}(x)\und\Phi_i(y)\und\neg\Phi_{i-2}(y))
$$
if $t_i=\thick$ and
$$
\Tau_i\feq\A x\A y\,(\neg\Phi_i(x)\Or\neg\Phi_i(y)\Or\Phi_{i-2}(x)\Or\Phi_{i-2}(y)\Or{x=y})
$$
if $t_i=\thin$ (if $i\le2$, the subformulas with non-positive indices should be ignored).

{\bf 2.}
The single-vertex graph is defined by a formula $\A x\A y\,(x=y)$.
The three classes of graphs of rank 1 are defined by the following three
formulas:
\begin{eqnarray*}
&&\E x\E y\,(x\ne y)\und\A x\A y\,(x\not\sim y),\\
&&\E x\E y\,(x\ne y)\und\A x\A y\,({x=y}\Or{x\sim y}),\\
&&\A x\E y\,(x\sim y)\und\A x\E y\,({x\ne y}\Or{x\not\sim y}).
\end{eqnarray*}
Suppose that $\rank G=m+1$ and $m\ge1$. Let $(t_0,t_1,\ldots,t_m)$
be the $m$-tail type of $G$.
Assume that $G$ is connected, that is, $t_0=\conn$ (the disconnected case is similar).
We use the formulas $\Phi_s(x)$, $\Psi_s$, and $\Tau_i$ constructed in the first part.
If the head type of $G$ is $\compl$ or $\empt$ (the former is possible if $m$
is even and the latter if $m$ is odd), then the type of $G$ is defined by
$$
\Psi_m\und\bigwedge_{i=1}^m\Tau_i\und\A x\Phi_{m+1}(x).
$$
If the head type of $G$ is $\norma$, then the type of $G$ is defined by
$$
\Psi_m\und\bigwedge_{i=1}^m\Tau_i\und\neg\E x\Phi_{m+1}(x).
$$
Indeed, $\Psi_m\und\bigwedge_{i=1}^m\Tau_i$ is true on a graph $H$ if and only if 
$H$ has the $m$-tail type $(t_0,t_1,\ldots,t_m)$ and $\rank H\ge m+1$.
Let $Q\subset V(H)$ denote the set of vertices not in the tail part.
Then $Q$ is a homogeneous set exactly when
$H$ satisfies $\A x\Phi_{m+1}(x)$, and $Q$ spans a normal subgraph exactly when
$H$ satisfies $\neg\E x\Phi_{m+1}(x)$.
\end{proofof}

\begin{remark}\rm
Part 1 of Lemma \ref{lem:indist} and part 2 of Lemma \ref{lem:indist}
readily imply that two uncolored graphs are indistinguishable in $\fo2$
if and only if they have the same type. The type of a given graph $G$
can be determined in \tc0. This follows from two observations:
\begin{itemize}
\item 
Suppose that $v$ is in the kernel of $G$ while $u$ and $w$ are not, i.e.,
$\rank v=\rank G$ and $\rank u,\rank w<\rank G$. If $u$ is of universal type
and $w$ is of isolated type, then $\deg w<\deg v<\deg u$.
\item 
The degree of a vertex of isolated type increases 
together with its rank, while the degree of a vertex of universal type decreases
as its rank increases.
\end{itemize}
Based on these observations, the type of $G$ is easy (in $\ac0$) to determine
after computing all vertex degrees (in $\tc0$).
It follows that the equivalence problem for $\fo2$ over uncolored graphs is in $\tc0\subseteq\nc1$.
If extended to $\fo3$ or to colored graphs, the equivalence problem is known to be
P-complete (Grohe \cite{Grohe99}).
\end{remark}

\begin{remark}\rm
The $\fo2$-equivalence relation over trees degenerates to four classes whose
represenatives are, respectively, the path graphs $P_1$, $P_2$, $P_3$, and $P_4$.
Here $P_n$ denotes the path on $n$ vertices. Note that $P_1$ is the single-vertex graph,
$P_2$ is complete, $P_4$ is normal, and $P_3$ has rank 2. The equivalence classes
of $P_1$ and $P_2$ are singletons. The equivalence class of $P_3$ consists of 
the stars with at least 3 vertices. The class of $P_4$ consists of all remaining trees,
which are normal. As it is easy to see, any two inequivalent trees are distinguishable
by a $\fo2$ formula with no quantifier alternation.
\end{remark}

\begin{remark}\rm
In general, Theorem \ref{thm:uncol-collapse} is optimal with respect to the alternation number.
Let $n\ge 5$ and consider the cycle $C_n$ on $n$ vertices and the wheel graph $W_n$ consisting of
the cycle $C_{n-1}$ and a universal vertex. Since $\rank C_n=1$ and $\rank W_n=2$,
these graphs are distinguishable in $\fo2$. However, if Spoiler plays only in $C_n$,
Duplicator wins by moving in the $C_{n-1}$ subgraph of $W_n$,
which is a normal graph. If Spoiler plays only in $W_n$,
Duplicator wins because $C_n$ is normal. This example justifies the equality in 
the 6-th row of Figure~\ref{fig:afunc}.
\end{remark}

\section{Discussion and further questions}

We left several questions open.
Is the logarithmic lower bound for $A^2(n)$ of Theorem \ref{thm:hierarchy2-color} tight over trees?
Can the general upper bound $A^2(n)\le n+1$ be improved? Say, to $A^2(n)\le n/2$?

Some of our arguments for $\fo2$ seem not to have obvious extensions to the larger
number of variables. 
Can the  logarithmic lower bound for $A^3(n)$ be improved over colored graphs? 
Say, to $A^3(n)=\Omega(n)$?
How tight is the upper bound $D^k_{\E}(G,H)\le (v(G)v(H))^{k-1}+1$ 
(cf.\ Theorem \ref{thm:upper-bound}) if $k\ge3$?

Is there a trade-off between the number of variables and the number of alternations?
Specifically, suppose that $n$-element structures $G$ and $H$ are distinguishable in $\fo2$.
In other terms, we know that $A^2(G,H)$ is finite, though it can be linear by Theorem \ref{thm:altcount}.
Does there exist a constant $k$ such that $A^k(G,H)=O(\log n)$ for any such $G$ and $H$?
Similarly, can we reduce even the quantifier depth $D^k(G,H)$ if we take a larger $k$?
A negative result in this direction is obtained
by Immerman \cite{Immerman81} who constructed pairs of colored digraphs $G$ and $H$
such that $D^3(G,H)<\infty$ while $D_{\fo{}}(G,H)>2^{\sqrt{\log n}-1}$.

Is there a trade-off between the number of alternations and the quantifier depth?
Specifically, let $D^k_a(G,H)=D^k_{\Sigma_{a+1}\cup\Pi_{a+1}}(G,H)$.
Suppose that $D^2_\E(G,H)<\infty$. 
By Theorem \ref{thm:lower-E} we know that this can require quadratic quantifier depth,
that is, sometimes we can have $D^2_\E(G,H)=\Omega(n^2)$. Is
there a constant $a$ such that $D^2_a(G,H)=O(n)$ for all such $G$ and $H$?
Moreover, can we exclude that even $a=1$ suffices? Or the lower bound
of Theorem \ref{thm:lower-E} can be strengthened to $D^2_0(G,H)=\Omega(n^2)$?

Let $D^k_a(G)$ denote the minimum quantifier depth of a formula in $(\Sigma_{a+1}\cup\Pi_{a+1})\cap\fo k$
defining a graph $G$ up to isomorphism.
Note that $D^k_a(G)=\max_HD^k_a(G,H)$, where the maximum
is taken over all graphs non-isomorphic to $G$.
Theorem \ref{thm:succinct} shows examples of a large gap between
the parameters $D^k_a(G,H)$ and $D^k_{a+1}(G,H)$
How far apart
can the parameters $D^k_a(G)$ and $D^k_{a+1}(G)$ be from one another?
If we consider the similar parameters for the unbounded-variable logic $\fo{}$, 
then it is known \cite{PikhurkoSV06} that the gap between $D_3(G)$ and $D_0(G)$
can be huge: there is no total recursive function such that $D_0(G)\le f(D_3(G))$.

\end{document}